\newtheorem{theorem}{Theorem}
\newtheorem{remark}{Remark}
\newtheorem{lemma}{Lemma}
\newtheorem{proposition}{Proposition}
\newtheorem{conjecture}{Conjecture}
\newtheorem{corollary}{Corollary}
\begin{document}

\title{On the Equal-Rate Capacity of the\\ AWGN Multiway Relay Channel}
\author{ Lawrence Ong, Christopher M.\ Kellett, and Sarah J.\ Johnson
\thanks{ Part of the material in this paper was presented at the 2010 IEEE International Symposium on Information Theory, Austin, USA, June 2010. }
\thanks{The authors are with the School of Electrical Engineering and
Computer Science, The University of Newcastle, Callaghan, NSW 2308,
Australia (email: lawrence.ong@cantab.net; \{chris.kellett,sarah.johnson\}@newcastle.edu.au).}
\thanks{This work is supported by the Australian Research Council under grants DP0877258 and DP1093114.} }

\maketitle

\begin{abstract}
The $L$-user additive white Gaussian noise multiway relay channel is investigated, where $L$ users exchange information at the same rate through a single relay. A new achievable rate region, based on the functional-decode-forward coding strategy, is derived. For the case where there are three or more users, and all nodes transmit at the same power, the capacity is obtained. For the case where the relay power scales with the number of users, it is shown that both compress-forward and functional-decode-forward achieve rates within a constant number of bits of the capacity at all SNR levels; in addition, functional-decode-forward outperforms compress-forward and complete-decode-forward at high SNR levels.
\end{abstract}


\begin{IEEEkeywords}
Additive white Gaussian noise (AWGN), capacity, decode-forward, functional-decode-forward, lattice code, multiway relay channel (MWRC).
\end{IEEEkeywords}

\section{Introduction}

We study the additive white Gaussian noise (AWGN) multiway relay channel (MWRC), in which $L$ users exchange full information at the same/equal rate via a relay. This channel models centralized networks where users communicate among themselves via a base station, e.g., multiuser mobile video conference calls.

\subsection{Main Results}
We consider the AWGN MWRC with $L$ users and one relay where is no direct link between any pair of users, as depicted in Fig.~\ref{fig:mwrc}. Each user has a transmitted power constraint $P$, the relay has a transmitted power constraint $P_0$, and the receiver noise at all users and the relay is normalized to one. Each user is to transmit its data to all other users at the same rate. We obtain the capacity when
\begin{equation}
P_0 \leq \max \left\{ \frac{LP-1}{2},(LP+1)^{\frac{L-1}{L}} -1 \right\}.
\end{equation}
More specifically, we have
\begin{enumerate}
\item the capacity if $L \geq 3$ and $P_0 = P$, i.e., when there are three or more users, and when all users and the relay transmit at the same power;
\item achievable rates within a constant $\frac{1}{2(L-1)\ln 2}$ bits of the capacity for all $P$ if $P_0 = LP$, i.e., when relay power scales with the number of users.
\end{enumerate}

\begin{figure}[t]
\centering
\resizebox{8cm}{!}{ 
\begin{picture}(0,0)%
\includegraphics{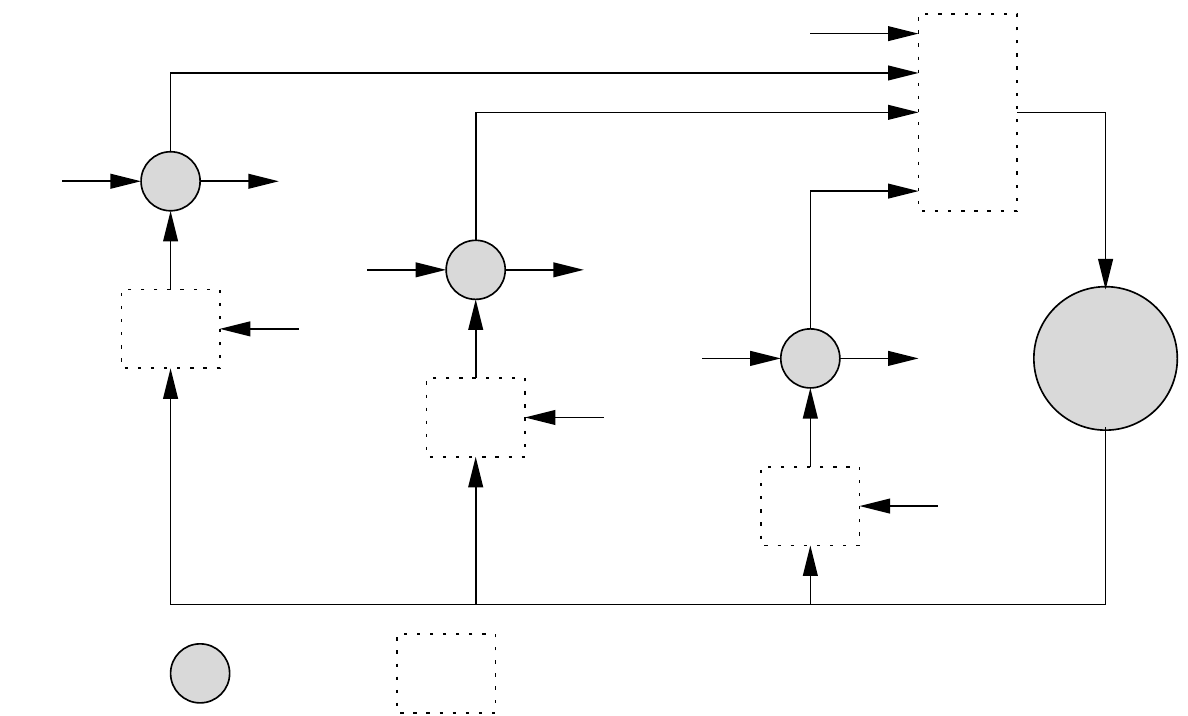}%
\end{picture}%
\setlength{\unitlength}{4144sp}%
\begingroup\makeatletter\ifx\SetFigFont\undefined%
\gdef\SetFigFont#1#2#3#4#5{%
  \fontsize{#1}{#2pt}%
  \fontfamily{#3}\fontseries{#4}\fontshape{#5}%
  \selectfont}%
\fi\endgroup%
\begin{picture}(5414,3255)(301,-2908)
\put(5101,-1421){\makebox(0,0)[lb]{\smash{{\SetFigFont{12}{14.4}{\familydefault}{\mddefault}{\updefault}{\color[rgb]{0,0,0}(relay)}%
}}}}
\put(991,-1186){\makebox(0,0)[lb]{\smash{{\SetFigFont{12}{14.4}{\familydefault}{\mddefault}{\updefault}{\color[rgb]{0,0,0}$\sum$}%
}}}}
\put(2251,-2761){\makebox(0,0)[lb]{\smash{{\SetFigFont{12}{14.4}{\familydefault}{\mddefault}{\updefault}{\color[rgb]{0,0,0}$\sum$}%
}}}}
\put(1126,-151){\makebox(0,0)[lb]{\smash{{\SetFigFont{12}{14.4}{\familydefault}{\mddefault}{\updefault}{\color[rgb]{0,0,0}$X_1$}%
}}}}
\put(4141,-466){\rotatebox{90.0}{\makebox(0,0)[lb]{\smash{{\SetFigFont{12}{14.4}{\familydefault}{\mddefault}{\updefault}{\color[rgb]{0,0,0}$\dotsm$}%
}}}}}
\put(3781,164){\makebox(0,0)[lb]{\smash{{\SetFigFont{12}{14.4}{\familydefault}{\mddefault}{\updefault}{\color[rgb]{0,0,0}$N_0$}%
}}}}
\put(4636,-151){\makebox(0,0)[lb]{\smash{{\SetFigFont{12}{14.4}{\familydefault}{\mddefault}{\updefault}{\color[rgb]{0,0,0}$\sum$}%
}}}}
\put(811,-871){\makebox(0,0)[lb]{\smash{{\SetFigFont{12}{14.4}{\familydefault}{\mddefault}{\updefault}{\color[rgb]{0,0,0}$Y_1$}%
}}}}
\put(4051,-691){\makebox(0,0)[lb]{\smash{{\SetFigFont{12}{14.4}{\familydefault}{\mddefault}{\updefault}{\color[rgb]{0,0,0}$X_L$}%
}}}}
\put(3736,-1681){\makebox(0,0)[lb]{\smash{{\SetFigFont{12}{14.4}{\familydefault}{\mddefault}{\updefault}{\color[rgb]{0,0,0}$Y_L$}%
}}}}
\put(4591,-1996){\makebox(0,0)[lb]{\smash{{\SetFigFont{12}{14.4}{\familydefault}{\mddefault}{\updefault}{\color[rgb]{0,0,0}$N_L$}%
}}}}
\put(2206,-1276){\makebox(0,0)[lb]{\smash{{\SetFigFont{12}{14.4}{\familydefault}{\mddefault}{\updefault}{\color[rgb]{0,0,0}$Y_2$}%
}}}}
\put(2521,-331){\makebox(0,0)[lb]{\smash{{\SetFigFont{12}{14.4}{\familydefault}{\mddefault}{\updefault}{\color[rgb]{0,0,0}$X_2$}%
}}}}
\put(1666,-1186){\makebox(0,0)[lb]{\smash{{\SetFigFont{12}{14.4}{\familydefault}{\mddefault}{\updefault}{\color[rgb]{0,0,0}$N_1$}%
}}}}
\put(5041,-331){\makebox(0,0)[lb]{\smash{{\SetFigFont{12}{14.4}{\familydefault}{\mddefault}{\updefault}{\color[rgb]{0,0,0}$Y_0$}%
}}}}
\put(5041,-2311){\makebox(0,0)[lb]{\smash{{\SetFigFont{12}{14.4}{\familydefault}{\mddefault}{\updefault}{\color[rgb]{0,0,0}$X_0$}%
}}}}
\put(3061,-1996){\makebox(0,0)[lb]{\smash{{\SetFigFont{12}{14.4}{\familydefault}{\mddefault}{\updefault}{\color[rgb]{0,0,0}$\dotsm$}%
}}}}
\put(3061,-1591){\makebox(0,0)[lb]{\smash{{\SetFigFont{12}{14.4}{\familydefault}{\mddefault}{\updefault}{\color[rgb]{0,0,0}$N_2$}%
}}}}
\put(316,-511){\makebox(0,0)[lb]{\smash{{\SetFigFont{12}{14.4}{\familydefault}{\mddefault}{\updefault}{\color[rgb]{0,0,0}$W_1$}%
}}}}
\put(1621,-511){\makebox(0,0)[lb]{\smash{{\SetFigFont{12}{14.4}{\familydefault}{\mddefault}{\updefault}{\color[rgb]{0,0,0}$\underline{\hat{W}_1}$}%
}}}}
\put(1711,-916){\makebox(0,0)[lb]{\smash{{\SetFigFont{12}{14.4}{\familydefault}{\mddefault}{\updefault}{\color[rgb]{0,0,0}$W_2$}%
}}}}
\put(2431,-916){\makebox(0,0)[lb]{\smash{{\SetFigFont{12}{14.4}{\familydefault}{\mddefault}{\updefault}{\color[rgb]{0,0,0}$2$}%
}}}}
\put(2386,-1591){\makebox(0,0)[lb]{\smash{{\SetFigFont{12}{14.4}{\familydefault}{\mddefault}{\updefault}{\color[rgb]{0,0,0}$\sum$}%
}}}}
\put(3241,-1321){\makebox(0,0)[lb]{\smash{{\SetFigFont{12}{14.4}{\familydefault}{\mddefault}{\updefault}{\color[rgb]{0,0,0}$W_L$}%
}}}}
\put(3961,-1321){\makebox(0,0)[lb]{\smash{{\SetFigFont{12}{14.4}{\familydefault}{\mddefault}{\updefault}{\color[rgb]{0,0,0}$L$}%
}}}}
\put(3916,-1996){\makebox(0,0)[lb]{\smash{{\SetFigFont{12}{14.4}{\familydefault}{\mddefault}{\updefault}{\color[rgb]{0,0,0}$\sum$}%
}}}}
\put(3016,-916){\makebox(0,0)[lb]{\smash{{\SetFigFont{12}{14.4}{\familydefault}{\mddefault}{\updefault}{\color[rgb]{0,0,0}$\underline{\hat{W}_2}$}%
}}}}
\put(1036,-511){\makebox(0,0)[lb]{\smash{{\SetFigFont{12}{14.4}{\familydefault}{\mddefault}{\updefault}{\color[rgb]{0,0,0}$1$}%
}}}}
\put(2656,-2761){\makebox(0,0)[lb]{\smash{{\SetFigFont{12}{14.4}{\familydefault}{\mddefault}{\updefault}{\color[rgb]{0,0,0}real addition}%
}}}}
\put(1441,-2761){\makebox(0,0)[lb]{\smash{{\SetFigFont{12}{14.4}{\familydefault}{\mddefault}{\updefault}{\color[rgb]{0,0,0}node}%
}}}}
\put(5311,-1231){\makebox(0,0)[lb]{\smash{{\SetFigFont{12}{14.4}{\familydefault}{\mddefault}{\updefault}{\color[rgb]{0,0,0}$0$}%
}}}}
\put(4546,-1321){\makebox(0,0)[lb]{\smash{{\SetFigFont{12}{14.4}{\familydefault}{\mddefault}{\updefault}{\color[rgb]{0,0,0}$\underline{\hat{W}_L}$}%
}}}}
\end{picture}%
}
\caption{An $L$-user AWGN MWRC, where $\underline{\hat{W}_i}$ is user $i$'s estimate of all other users' messages}
\label{fig:mwrc}
\end{figure}

\subsection{Related Work}

The MWRC is an extension of the two-way relay channel (TWRC), i.e., $L=2$. Since there is an embedded relay channel from user 1 to user 2 via the relay, and vice versa, coding strategies proposed for relay networks~\cite{covergamal79,kramergastpar04}, i.e., complete-decode-forward,\footnote{This coding strategy is commonly known as decode-forward. We modify the name to distinguish this coding strategy and the functional-decode-forward coding strategy to be discussed next.} compress-forward, and amplify-forward, have been modified for the TWRC~\cite{rankovwittneben05,rankovwittneben06}.

Besides the aforementioned classical relaying strategies for relay networks,
Yu et al.~\cite{wuchoukung05} and Larsson et al.~\cite{larssonjohanssonsunell05} used the idea of network coding~\cite{ahlswedecai00} for the TWRC and proposed that the relay performs a modulo-two summation of the users' messages, and broadcasts the summation. After getting the sum of the messages, each user can then obtain the other user's message by subtracting its own message from the sum. The network-coding-like operation can also be achieved at the modulation level, if amplify-forward is carried out using binary phase-shift keying (BPSK)~\cite{zhangliew06} or minimum-shift keying (MSK) modulation~\cite{kattigollakota07}.

Since the relay needs to broadcast only the modulo-sum of the messages (or in general a function defined such that, having the function and its message, each user is able to decode the message of the other user), the relay may directly decode the function that it will broadcast without needing to decode the individual messages. This strategy is known as functional-decode-forward and also compute-and-forward. This strategy was studied on the binary TWRC using binary linear codes~\cite{namchung08}, and on the AWGN TWRC using lattice codes~\cite{namchung08,wilsonnarayananpfisersprintson10,namchunglee09}. The scenario of having a relay decoding a linear function of messages from multiple senders was studied by Nazer and Gastpar~\cite{nazergastpar08eu,nazergastpar11}.

Complete-decode-forward, compress-forward, and amplify-forward for the TWRC were subsequently extended to the MWRC by G\"und\"uz et al.~\cite{gunduzyener09}. We extended functional-decode-forward to the binary MWRC~\cite{ongjohnsonkellett10cl}. The idea was to split the uplink into $(L-1)$ blocks, and when the user pairs $(1,2), (2,3), \dotsc, (L-1,L)$ transmit in the blocks respectively. In each block, the relay decodes a function (in this case, the modulo-two sum) of the two corresponding messages, and broadcasts the function back to the users. If each user can decode the $(L-1)$ functions from the relay, it can then decode the messages of all the other users.  In this paper, we extend this strategy to the AWGN MWRC. We propose a {\em rotated} scheme such that for each message to be sent, the uplink is again split into $(L-1)$ blocks, but different pairs of users are chosen from $\{ (1,2), (2,3), \dotsc, (L-1,L), (L,1)\}$ for each message in a round-robin fashion. This modification enables each user to transmit at a higher power when active while keeping the average transmitted power to $P$. For each block, the two active users transmit using lattice codes as they do in the AWGN TWRC~\cite{wilsonnarayananpfisersprintson10}.

While the aforementioned functional-decode-forward for the binary MWRC~\cite{ongjohnsonkellett10cl} and for the AWGN MWRC (this paper) deals with users transmitting at the same rate, we have recently extended this coding strategy to the finite field MWRC where the users transmit at possibly different rates\cite{ongmjohnsonit11}. Note, however, that the ``rotation'' of the functional-decode-forward scheme introduced in this paper is not required for the binary MWRC or the finite field MWRC.

The {\em restricted} MWRC was studied by G\"und\"uz et al.~\cite{gunduzyener09} and Ong et al.~\cite{ongjohnsonkellett10cl}, where each user's transmitted signals can only depend on it message; however, the {\em unrestricted} MWRC is studied in this paper where each user's transmitted signals can depend on both its message and its previously received signals.

It was shown by G\"und\"uz et al.~\cite{gunduzyener09} that for the restricted AWGN MWRC, complete-decode-forward performs poorly at high SNR, compress-forward achieves rates within a constant fraction of a bit of the capacity at all SNRs, and amplify-forward is always worse than compress-forward at all SNRs. For the case where all nodes transmit at the same power, the same authors noted that complete-decode-forward achieves the capacity with ``a smaller number of users''.

Note that the capacity of the restricted AWGN MWRC is necessarily upper bounded by that of the unrestricted AWGN MWRC.
For the unrestricted AWGN MWRC with three or more users, when all nodes transmit at the same power, we obtain the capacity by showing that complete-decode-forward achieves the capacity for SNR $\leq 0$ dB, and functional-decode-forward achieves the capacity for SNR $\geq 0$ dB. As we shall see, neither coding strategy utilizes feedback at the users, and hence we incidentally obtain the capacity for the corresponding restricted AWGN MWRC (for three or more users and equal transmitted power). 

For the case where the relay's transmitted power scales with the number of users, we show that none of the three coding strategies can achieve the capacity upper bound. However, we show that both compress-forward and functional-decode-forward achieve within a constant fraction of a bit of the capacity. At high SNR, we also show that functional-decode-forward achieves higher rates than complete-decode-forward and compress-forward. Furthermore, numerical results suggest that for any $L$ and at any SNR, either complete-decode-forward or functional-decode-forward always outperforms compress-forward.

\subsection{Organization}
The rest of the paper is organized as follows: In Section~\ref{section:channel-model} we define the AWGN MWRC considered in this paper. We derive an upper bound to the capacity in Section~\ref{section:upper-bound}, and review two existing lower bounds to the capacity and derive a new lower bound based on functional-decode-forward in Section~\ref{section:existing}. We present capacity results for the AWGN MWRC in Section~\ref{section:capacity}. Section~\ref{section:conclusion} concludes the paper.

\section{Channel Model}\label{section:channel-model}
Fig.~\ref{fig:mwrc} depicts the $L$-user AWGN MWRC considered in this paper, where
the uplink and the downlink channels are separated, i.e., there are no direct user-to-user links. Nodes $1,2,\dotsc,L$ are the users (where $L \geq 2$), and node $0$ is the relay. We denote by $X_i$ node $i$'s input to the channel, $Y_i$ the channel output received by node $i$, and $W_i$ node $i$'s message. We consider full data exchange, i.e., each user is to decode the messages from all other users.

The AWGN MWRC is defined as follows:
\begin{enumerate}
\item The uplink channel is the sum of all users' channel inputs and the relay's receiver noise
\begin{equation}
Y_0 = \sum_{i=1}^L X_i + N_0,
\end{equation}
where $N_0$ is an i.i.d. zero-mean Gaussian random variable with variance $\sigma_0^2$.
\item The downlink consists of an independent point-to-point AWGN channel for each user, $i \in \{1,2,\dotsc,L\}$
\begin{equation}
Y_i = X_0 + N_i,
\end{equation}
where $N_i$ is user $i$'s receiver noise and is an i.i.d. zero-mean Gaussian random variable with variance $\sigma_i^2$.
\end{enumerate}

Consider $n$ uses of the AWGN MWRC.  We use bold letters to define vectors of length $n$ (in time), e.g., $\boldsymbol{X} = (X[1], X[2], \dotsc, X[n])$, where $X[t]$ is the random variable $X$ at time $t$. We also define $X_{\mathcal{S}} = \{X_i: i \in \mathcal{S}\}$. We impose the following average transmitted power constraints on the users and the relay:  $\sum_{t=1}^n E[X_j ^2[t]]/n \leq P_j$, for all $j \in \{0,1,\dotsc, L\}$.

Unlike previous work on the MWRC~\cite{gunduzyener09,ongjohnsonkellett10cl}, we consider the \emph{unrestricted} MWRC in the sense that the transmitted signals of each user can depend on both its message and its previously received signals. We consider a block code of $n$ channel uses consisting of the following functions:
\begin{enumerate}
\item a set of encoding functions at each node $j$: $X_j[t] = f_{j,t} (W_j, Y_j[1], Y_j[2], \dotsc, Y_j[t-1])$, for all $j \in \{0,1, \dotsc, L\}$ and $t \in \{1,2,\dotsc,n\}$, where $W_0 = \varnothing$; 
\item a decoding function at each user $i$: $\underline{\hat{W}_i} \triangleq(\hat{W}_{i,1}, \dotsc, \hat{W}_{i,i-1}, \hat{W}_{i,i+1}, \dotsc, \hat{W}_{i,L}) = g_i(\boldsymbol{Y}_i,W_i)$, for all  $i \in \{1,2,\dotsc, L\}$, where $\hat{W}_{i,j}$ is user $i$'s estimate of $W_j$.
\end{enumerate}

Assuming that the users' messages are independent and each message $W_i$ is uniformly distributed in $\{1,2,\dotsc, 2^{nR_i}\}$, the average probability of error is defined as $P_\text{e} \triangleq \Pr \Big\{\hat{W}_{i,j} \neq W_j, \text{ for some } j \in \{1,2,\dotsc, L\}$ $\text{and some } i \neq j\Big\}$.
A rate tuple $(R_1,R_2, \dotsc, R_L)$ is said to be \emph{achievable} if and only if (iff) the following is true: For any $\epsilon > 0$, there exists for sufficiently large $n$ at least one block code defined above such that $P_\text{e} < \epsilon$.

We say that a node can \emph{reliably} decode a message iff the average probability that the node wrongly decodes the message can be made arbitrarily small.

In this paper, we consider symmetrical networks, i.e.,  $P_i=P$ for all users $i \in \{1,2,\dotsc,L\}$, and $\sigma_j^2=1$ for all nodes $j \in \{0,1,\dotsc,L\}$. So, in a symmetrical AWGN MWRC, SNRs at all the users are equal, given by $\frac{P_0}{\sigma_i^2} = P_0$, and the SNR at the relay is $\frac{P}{\sigma_0^2} = P$.
Furthermore, we focus on the \emph{equal rate} $R = R_i$, $\forall i \in \{1,2,\dotsc,L\}$. We say that the equal rate $R$ is achievable iff the rate tuple $(R,R,\dotsc,R)$ is achievable.

We define the \emph{equal-rate capacity} of the MWRC (also known as the symmetrical capacity~\cite{gunduzyener09}) as $C \triangleq \sup \{R: (R,R,\dotsc,R) \text{ is achievable}\}$.
The equal rate is useful in systems where all users have the same amount of information to send, or in \emph{fair} systems where every user is to be given the same guaranteed uplink \emph{bandwidth}, i.e., each user can send data up to a certain rate, at which all other users are able to decode.

With a slight abuse of terminology, we say that a coding strategy ``achieves'' the equal rate $R^*$, if the strategy achieves all equal rates $R < R^*$. So, a coding strategy is said to achieve the equal-rate capacity if it achieves all equal rates $R < C$.

\section{An Upper Bound to The Equal-Rate Capacity}\label{section:upper-bound}

In this section, we derive an upper bound on the equal-rate capacity of the AWGN MWRC.
First, we quote a result by Cover and Thomas~\cite[page 589]{coverthomas06}. Consider an arbitrary network of $m$ nodes where node $i$ sends information  to node $j$ at the rate $R_{i,j}$. We have the following lemma:
\begin{lemma}\label{lemma:cover-cut-set}
If $\{R_{i,j}\}$  are achievable, there exists some joint probability distribution $p(x_1,x_2,\dotsc,x_m)$ such that
\begin{equation}
\sum_{i \in \mathcal{S}, j \in \mathcal{S}^c} R_{i,j} \leq I(X_\mathcal{S};Y_{\mathcal{S}^c} | X_{\mathcal{S}^c} ),\label{eq:thm-ub-new}
\end{equation}
for all $\mathcal{S} \subset \{1,2,\dotsc,m\}$, and $\mathcal{S}^c = \{1,2,\dotsc,m\} \setminus \mathcal{S}$.
\end{lemma}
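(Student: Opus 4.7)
The plan is to prove this by the standard cut-set argument, combining Fano's inequality with a per-symbol mutual-information bound and a time-sharing step to extract a single joint distribution. I would start by fixing any $\mathcal{S} \subset \{1,\dotsc,m\}$ and any achievable rate tuple $\{R_{i,j}\}$. By definition of achievability, for every $\epsilon > 0$ there is, for all sufficiently large $n$, a block code with average probability of error less than $\epsilon$. Let $W_{\mathcal{S} \to \mathcal{S}^c}$ collect all messages $W_{i,j}$ with $i \in \mathcal{S}$, $j \in \mathcal{S}^c$, and note $H(W_{\mathcal{S} \to \mathcal{S}^c}) = n \sum_{i \in \mathcal{S}, j \in \mathcal{S}^c} R_{i,j}$ by independence and uniformity of the messages.

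Next I would apply Fano's inequality: because every destination $j \in \mathcal{S}^c$ must reliably decode $\{W_{i,j} : i \in \mathcal{S}\}$ from $(\boldsymbol{Y}_j, W_j)$, we have
\begin{equation}
H\bigl(W_{\mathcal{S} \to \mathcal{S}^c} \,\big|\, \boldsymbol{Y}_{\mathcal{S}^c}, W_{\mathcal{S}^c}\bigr) \leq n \epsilon_n,
\end{equation}
with $\epsilon_n \to 0$ as $\epsilon \to 0$ and $n \to \infty$. Rearranging gives
\begin{equation}
n \sum_{i \in \mathcal{S}, j \in \mathcal{S}^c} R_{i,j} \leq I\bigl(W_{\mathcal{S} \to \mathcal{S}^c}; \boldsymbol{Y}_{\mathcal{S}^c} \,\big|\, W_{\mathcal{S}^c}\bigr) + n\epsilon_n.
\end{equation}
I would then expand the mutual information using the chain rule over time and bound each term by showing that
\begin{equation}
I\bigl(W_{\mathcal{S} \to \mathcal{S}^c}; Y_{\mathcal{S}^c}[t] \,\big|\, W_{\mathcal{S}^c}, Y_{\mathcal{S}^c}^{t-1}\bigr) \leq I\bigl(X_{\mathcal{S}}[t]; Y_{\mathcal{S}^c}[t] \,\big|\, X_{\mathcal{S}^c}[t]\bigr),
\end{equation}
using that (i) $X_{\mathcal{S}^c}[t]$ is a function of $(W_{\mathcal{S}^c}, Y_{\mathcal{S}^c}^{t-1})$ via the encoders, and (ii) by the memoryless property of the channel, $Y_{\mathcal{S}^c}[t]$ is conditionally independent of everything in the past given the current inputs $(X_\mathcal{S}[t], X_{\mathcal{S}^c}[t])$. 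To finish, introduce a time-sharing variable $Q$ uniform on $\{1,\dotsc,n\}$, set $X_i \triangleq X_i[Q]$ and $Y_i \triangleq Y_i[Q]$, absorb $Q$ into the joint distribution $p(x_1,\dotsc,x_m)$ (or invoke concavity of the mutual information in the input distribution), and let $\epsilon \to 0$ to obtain \eqref{eq:thm-ub-new}.

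The step I expect to require the most care is the per-symbol bound, precisely because the model in this paper is the \emph{unrestricted} MWRC: each encoder output $X_j[t]$ may depend on past received symbols $Y_j^{t-1}$, so I must argue the Markov chain $(W_{\mathcal{S} \to \mathcal{S}^c}, W_{\mathcal{S}^c}, Y_{\mathcal{S}^c}^{t-1}) \to (X_\mathcal{S}[t], X_{\mathcal{S}^c}[t]) \to Y_{\mathcal{S}^c}[t]$ rigorously from the channel's memorylessness rather than from any independence between the inputs. Once that Markov structure is in place, data processing delivers the desired single-letter bound and the remaining manipulations are routine.
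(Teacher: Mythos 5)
You should note that the paper itself gives no proof of this lemma---it is quoted verbatim from Cover and Thomas~\cite[p.~589]{coverthomas06}---and your outline reproduces precisely the standard cut-set argument found there: Fano's inequality across the cut, the chain rule over time, the per-symbol bound via the Markov chain $(W_{\mathcal{S}\to\mathcal{S}^c}, W_{\mathcal{S}^c}, Y_{\mathcal{S}^c}^{t-1}) \to (X_{\mathcal{S}}[t], X_{\mathcal{S}^c}[t]) \to Y_{\mathcal{S}^c}[t]$, and a time-sharing variable removed by concavity of mutual information in the input distribution. Your identification of the delicate step---that this Markov chain must be derived from the memorylessness of the channel rather than from any independence of the inputs, since the unrestricted model lets $X_j[t]$ depend on $Y_j^{t-1}$---is exactly right, so the proposal is correct and follows the same route as the cited source.
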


Upper bounds of the type in Lemma~\ref{lemma:cover-cut-set} are often called cut-set upper bounds. A cut-set upper bound to the capacity of a network is the maximum rate that information can be transferred across a \emph{cut} separating two disjoint sets of nodes, assuming that all nodes on each side of the cut can fully cooperate.

By applying Lemma~\ref{lemma:cover-cut-set} to the AWGN MWRC, we have the following:
\begin{lemma}\label{lemma:binary-upper-bound}
The equal-rate capacity of the AWGN MWRC is upper-bounded by
\begin{align}
C &\leq \min_{\ell \in \{1,2,\dotsc,L-1\}} \left\{ \frac{1}{2\ell} \log ( 1 + \ell^2 P ), \frac{1}{2(L-1)} \log ( 1 + P_0) \right\} \nonumber \\
&\triangleq R_\text{UB}. \label{eq:upper-bound}
\end{align}
\end{lemma}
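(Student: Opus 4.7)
The plan is to prove (\ref{eq:upper-bound}) by applying Lemma~\ref{lemma:cover-cut-set} to two families of cuts: an ``uplink'' family $\mathcal{S}=\{1,\dotsc,\ell\}$ indexed by $\ell\in\{1,\dotsc,L-1\}$, which yields the $\tfrac{1}{2\ell}\log(1+\ell^{2}P)$ terms, and a ``downlink'' cut that isolates a single user, which yields $\tfrac{1}{2(L-1)}\log(1+P_{0})$. A key observation driving the left-hand side of~(\ref{eq:thm-ub-new}) is that because every $W_i$ must be decoded at every user $j\neq i$, a message $W_i$ with $i\in\mathcal{S}$ contributes exactly $R$ to that sum whenever $\mathcal{S}^{c}$ contains at least one user.

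For the uplink family, with $\mathcal{S}=\{1,\dotsc,\ell\}$ the relay and the remaining $L-\ell\ge 1$ users lie in $\mathcal{S}^{c}$, giving $\ell R$ on the left. On the right, conditioning on $X_{\mathcal{S}^{c}}$ turns each downlink output $Y_j=X_0+N_j$ (for $j\in\mathcal{S}^{c}\setminus\{0\}$) into noise independent of $X_{\mathcal{S}}$, so only $Y_0$ is informative and $I(X_{\mathcal{S}};Y_{\mathcal{S}^{c}}\mid X_{\mathcal{S}^{c}}) = h(\sum_{i=1}^{\ell}X_i+N_0\mid X_{\mathcal{S}^{c}})-h(N_0)$. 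Applying $h(Z\mid W)\le\tfrac{1}{2}\log 2\pi e\,\mathrm{Var}(Z)$ together with $E[(\sum_{i=1}^{\ell}X_i)^{2}]\le\bigl(\sum_{i=1}^{\ell}\sqrt{E[X_i^{2}]}\,\bigr)^{2}\le\ell^{2}P$ by Cauchy--Schwarz bounds this by $\tfrac{1}{2}\log(1+\ell^{2}P)$, and dividing by $\ell$ gives the first family of bounds.

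For the downlink, I take $\mathcal{S}^{c}=\{i\}$ for any user $i$: all $L-1$ messages $\{W_j\}_{j\neq i}$ cross the cut, giving $(L-1)R$ on the left, while $Y_i=X_0+N_i$ gives $I(X_{\mathcal{S}};Y_i\mid X_i)=I(X_0;Y_i\mid X_i)\le\tfrac{1}{2}\log(1+P_0)$ by the same Gaussian argument and $E[X_0^{2}]\le P_0$. This rearranges to $R\le\tfrac{1}{2(L-1)}\log(1+P_0)$; taking the minimum of the $L$ bounds obtained yields $R_{\text{UB}}$.

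The subtle point --- and the main reason some care is needed despite the computations being routine --- is that the paper works in the \emph{unrestricted} MWRC, where $X_j[t]$ may depend on past $Y_j[1],\dotsc,Y_j[t-1]$ so the users' signals may be arbitrarily correlated across users and with $X_0$. The argument survives because none of the entropy bounds above invoked independence; only the individual per-node second-moment constraints $E[X_j^2]\le P_j$ were used. The remaining step, which I would execute via standard techniques, is lifting these single-letter estimates to $n$ channel uses via the chain rule, the memoryless property, and Jensen's inequality applied to the time-averaged power constraints $\tfrac{1}{n}\sum_{t}E[X_j^{2}[t]]\le P_j$.
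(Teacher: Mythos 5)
Your proposal is correct and follows essentially the same route as the paper: both apply the cut-set bound of Lemma~\ref{lemma:cover-cut-set} to the cuts separating a set of $\ell$ users from the relay together with the remaining users (yielding $\frac{1}{2\ell}\log(1+\ell^2 P)$ via a maximum-entropy bound on the possibly correlated sum $\sum_{i}X_i+N_0$) and to the cut isolating a single user (yielding $\frac{1}{2(L-1)}\log(1+P_0)$). The only cosmetic differences are that you justify the $\ell^2P$ second-moment bound by Cauchy--Schwarz where the paper takes $X_i=X_j$ in the Gaussian maximizer, and that you make explicit the multicast counting on the left-hand side, which the paper states without comment.
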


\begin{proof}
Let a strict, non-empty subset of users be $\mathcal{U} \subset \{1,2,\dotsc, L\}$, and define $\mathcal{U}^\text{c} = \{1,2,\dotsc, L\} \setminus \mathcal{U}$, i.e., the set of users not in $\mathcal{U}$. We have $1 \leq |\mathcal{U}| \leq L-1$. Consider the cut separating $\mathcal{U}$ and $\{0\} \cup \,  \mathcal{U}^\text{c}$, i.e., the relay is grouped with $\mathcal{U}^\text{c}$.
The total information flow from $\mathcal{U}$ to $\{0\} \cup \,  \mathcal{U}^\text{c}$ is $W_\mathcal{U}$ with the sum rate of $|\mathcal{U}|R$. We have the following rate constraint on these messages, for all $\mathcal{U} \subset \{1,2,\dotsc, L\}$:
\begin{subequations}
\begin{align}
|\mathcal{U}|R & \leq I(X_\mathcal{U}; Y_{\{0\} \cup \,  \mathcal{U}^\text{c}} | X_{\{0\} \cup \,  \mathcal{U}^\text{c}} )\\
 &= h(Y_{\{0\} \cup \,  \mathcal{U}^\text{c}}|X_{\{0\} \cup \,  \mathcal{U}^\text{c}}) - h(Y_{\{0\} \cup \,  \mathcal{U}^\text{c}}|X_\mathcal{U},X_{\{0\} \cup \,  \mathcal{U}^\text{c}} )\\
&= h(Y_0,Y_{\mathcal{U}^\text{c}}|X_0,X_{\mathcal{U}^\text{c}} ) - h(Y_0,Y_{\mathcal{U}^\text{c}}|X_{\{0,1,\dotsc,L\}}) \\
&\leq h\left( \sum_{i\in\mathcal{U}} X_i + N_0,N_{\mathcal{U}^\text{c}} \right) - h(N_0,N_{\mathcal{U}^\text{c}})\\
&= h\left( \sum_{i\in\mathcal{U}} X_i + N_0\right) + h(N_{\mathcal{U}^\text{c}}) - h(N_0) - h(N_{\mathcal{U}^\text{c}}) \label{eq:indep-noise}\\
&\leq \frac{1}{2} \log 2 \pi e \left( \left(\sum_{i \in \mathcal{U}} \sqrt{P_i}\right)^2 + \sigma_0^2 \right) - \frac{1}{2} \log 2 \pi e \sigma_0^2 \label{eq:gaussian-input}\\
R & \leq \frac{1}{2|\mathcal{U}|} \log \left( 1 + |\mathcal{U}|^2 P \right), \label{eq:all-u}
\end{align}
\end{subequations}
where \eqref{eq:indep-noise} is because $\left(\sum_{i\in\mathcal{U}} X_i + N_0\right)$ and $N_{\mathcal{U}^\text{c}}$ are independent, and so are all $N_i$ for $ i \in \{0,1,\dotsc,L\}$; \eqref{eq:gaussian-input} follows from the result that Gaussian inputs $\{X_i:i\in\mathcal{S}\}$ maximize the entropy subject to a second moment constraint~\cite[Theorem 8.6.5]{coverthomas06}, and by taking $X_i = X_j, \forall i,j \in \mathcal{U}$; and
\eqref{eq:all-u} follows from our definition that $\sigma_0^2 = 1$ and $P_i = P$, for all $i \in \{1,2,\dotsc, L\}$.



Now, we consider the  cut separating $\{0\} \cup \,  \mathcal{U}$ and $\mathcal{U}^\text{c}$. The total information flow from $\{0\} \cup \,  \mathcal{U}$ to $\mathcal{U}^\text{c}$ is again $W_\mathcal{U}$ with the sum rate  of $|\mathcal{U}|R$. We have the following rate constraint on these messages, for all  $\mathcal{U} \subset \{1,2,\dotsc, L\}$:
\begin{subequations}
\begin{align}
|\mathcal{U}|R
&\leq I(X_{\{0\} \cup \,  \mathcal{U}}; Y_{\mathcal{U}^c} | X_{\mathcal{U}^c} )\\
&= h(Y_{\mathcal{U}^c} | X_{\mathcal{U}^c} ) - h(Y_{\mathcal{U}^c}|X_{\{0,1,\dotsc,L\}})\\
&\leq h(Y_{\mathcal{U}^c}) - h(N_{\mathcal{U}^c})\\
&\leq h(Y_{\mathcal{U}^c}) - h(N_i), \quad \text{for some } i \in \mathcal{U}^\text{c}\\
R &\leq \frac{1}{|\mathcal{U}|} \left( h(Y_{\mathcal{U}^c}) - \frac{1}{2} \log 2 \pi e \right), \label{eq:common-sigma}
\end{align}
\end{subequations}
where \eqref{eq:common-sigma} is because $\sigma_i^2 = 1, \forall i$. The above must hold when $|\mathcal{U}| = L-1$, meaning that $\mathcal{U}^c$ is a singleton. So, we must have
\begin{subequations}
\begin{align}
R &\leq \frac{1}{|L-1|} \left( \frac{1}{2}\log 2 \pi e (1 + P_0 ) - \frac{1}{2} \log 2 \pi e \right) \label{eq:relay-gaussian}\\
&\leq \frac{1}{2(L-1)} \log ( 1 + P_0), \label{eq:ub-3}
\end{align}
\end{subequations}
where \eqref{eq:relay-gaussian} is because Gaussian input $X_0$ maximizes the entropy subject to a second moment constraint. 

Since the rate $R$ must be bounded by constraints  \eqref{eq:all-u} for all $1 \leq |\mathcal{U}| \leq L-1$ and  \eqref{eq:ub-3}, we have Lemma~\ref{lemma:binary-upper-bound}.
\end{proof}


\section{Lower Bounds to the Equal-Rate Capacity}\label{section:existing}

\subsection{Existing Strategies}

G\"und\"uz et al.~\cite{gunduzyener09} considered the AWGN MWRC where users, grouped into multiple {\em clusters}, exchange information through one relay. Users in each cluster fully exchange their messages, but they do not exchange any information across clusters. Setting the number of clusters to one, we get the MWRC considered in this paper. Hence, we have the following two achievable rates:

\subsubsection{Complete-Decode-Forward}
Using the complete-decode-forward coding strategy, the relay decodes all users' messages on the uplink. It then re-encodes and broadcasts a function of the messages back to the users on the downlink~\cite{rankovwittneben06,knopp06,gunduzyener09}. The function is also constructed such that each user can decode all other users' messages from the function and its own message. Complete-decode-forward can achieve the following rate~\cite[Proposition 3]{gunduzyener09}:
\begin{proposition}\label{theorem:cdf}
Consider an AWGN MWRC. Using the complete-decode-forward coding strategy, the following equal rates are achievable:
\begin{equation}
R < \min \left\{ \frac{1}{2L} \log ( 1 + LP), \frac{1}{2(L-1)} \log ( 1 + P_0) \right\} \triangleq R_\text{CDF}. \label{eq:cdf}
\end{equation}
\end{proposition}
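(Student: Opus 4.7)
The plan is to split the coding scheme into an uplink phase, in which the relay fully decodes every user's message, and a downlink phase, in which the relay broadcasts a single compact network-coded summary that every user can invert using its own message as side information. The two phases are run concurrently using standard block-Markov scheduling so that neither the first nor the second bound in \eqref{eq:cdf} incurs a time-sharing loss.

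For the uplink, each user $i$ encodes $W_i$ with an i.i.d.\ Gaussian codebook of per-symbol power $P$, and the relay performs joint-typicality decoding on the Gaussian multiple-access channel $Y_0 = \sum_{i=1}^L X_i + N_0$. The standard Gaussian MAC achievability result tells us that the relay reliably recovers $(W_1,\ldots,W_L)$ whenever
\begin{equation*}
\sum_{i \in \mathcal{S}} R_i < \frac{1}{2}\log\bigl(1 + |\mathcal{S}|P\bigr)
\end{equation*}
for every nonempty $\mathcal{S}\subseteq\{1,\ldots,L\}$. Under the equal-rate assumption this collapses to $R < \tfrac{1}{2|\mathcal{S}|}\log(1+|\mathcal{S}|P)$, and because $x\mapsto \tfrac{1}{2x}\log(1+xP)$ is strictly decreasing in $x$ (its derivative reduces to $\tfrac{1}{2x^2}(\tfrac{xP}{1+xP}-\log(1+xP))\leq 0$), the binding constraint is $|\mathcal{S}|=L$, yielding $R < \tfrac{1}{2L}\log(1+LP)$.

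For the downlink, once the relay has recovered $(W_1,\ldots,W_L)$, it forms the combined index
\begin{equation*}
M \;=\; \bigl(W_1 \oplus W_2,\; W_1 \oplus W_3,\; \ldots,\; W_1 \oplus W_L\bigr),
\end{equation*}
which has total rate $(L-1)R$ and has the property that any user $i$, knowing $W_i$ together with $M$, can first recover $W_1$ (directly if $i=1$, otherwise via $W_1 = W_i \oplus (W_1 \oplus W_i)$) and then every other $W_j$. The relay encodes $M$ as a single common message with a Gaussian codebook of power $P_0$ and broadcasts it over the $L$ parallel point-to-point channels $Y_i = X_0 + N_i$. Since $\sigma_i^2=1$ for every $i$, single-user Gaussian channel coding guarantees that every user decodes $M$ reliably provided $(L-1)R < \tfrac{1}{2}\log(1+P_0)$.

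Taking the minimum of the two bounds gives $R_\text{CDF}$, and block-Markov operation (users transmit block-$b$ messages while the relay forwards the summary derived from block-$(b-1)$ messages) makes the uplink and downlink phases run simultaneously with vanishing rate loss as the number of blocks tends to infinity. I do not expect a serious obstacle here: the only two points that merit care are (i) checking that the full-set cut $|\mathcal{S}|=L$ dominates among the MAC inequalities under the equal-rate constraint, and (ii) exhibiting a network-coded summary $M$ of rate exactly $(L-1)R$ that is invertible given any single $W_i$. Both are elementary, so the proof is essentially a clean composition of Gaussian MAC achievability with single-user Gaussian broadcast of a common message.
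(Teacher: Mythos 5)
Your proposal is correct, but note that the paper offers no proof of this proposition at all: it is imported verbatim from G\"und\"uz et al.\ (cited as~\cite[Proposition 3]{gunduzyener09}, obtained by specializing their multi-cluster model to a single cluster). Your argument is the standard one underlying that cited result, and every step checks out: the Gaussian MAC region with equal rates is dominated by the full-set constraint because $x \mapsto \tfrac{1}{2x}\log(1+xP)$ is decreasing (your derivative is missing a $\tfrac{1}{\ln 2}$ on the first term, but the sign conclusion via $\ln(1+u)\ge u/(1+u)$ is unaffected); the network-coded summary $(W_1\oplus W_2,\dotsc,W_1\oplus W_L)$ of rate $(L-1)R$ is invertible from any single $W_i$; and since the uplink and downlink are physically separate channels in this model, the one-block offset incurs no time-sharing penalty. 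The only cosmetic caveat is that your construction is the \emph{restricted} (no-feedback) version of complete-decode-forward, which is exactly what the paper intends --- it later relies on the fact that $R_\text{CDF}$ is achievable without using the users' received signals in their encoders.
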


\subsubsection{Compress-Forward}
Using the compress-forward coding strategy, the relay quantizes its received signals. It then encodes and broadcasts the quantized signals to the users~\cite{rankovwittneben06,schnurroechtering07,gunduzyener09}. The quantization level is determined such that each user can decode the quantized signals, which contain the sum of all users' transmission, uplink channel noise, and quantization noise. Subtracting its own transmission from the quantized signals, each user then decodes other users' messages. Compress-forward can achieve the following rate~\cite[Proposition 4]{gunduzyener09}:
\begin{proposition}\label{theorem:cf}
Consider an AWGN MWRC. Using the compress-forward coding strategy, the following equal rates are achievable:
\begin{equation}
R < \frac{1}{2(L-1)} \log \left( 1 + \frac{(L-1)PP_0}{1+(L-1)P + P_0} \right) \triangleq R_\text{CF}.
\end{equation}
\end{proposition}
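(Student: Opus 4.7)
My plan is to adapt the standard compress-forward scheme to the MWRC by combining i.i.d.\ Gaussian random coding at the users with Wyner--Ziv compression at the relay. Each user $i$ independently chooses an i.i.d.\ $\mathcal{N}(0,P)$ codebook to encode $W_i$ and transmits $\boldsymbol{X}_i$ over the uplink. The relay does not attempt to decode any individual $W_i$; instead, it passes its observation $\boldsymbol{Y}_0 = \sum_{i=1}^L \boldsymbol{X}_i + \boldsymbol{N}_0$ through the Gaussian test channel $\hat{Y}_0 = Y_0 + \hat{N}$, where $\hat{N}\sim\mathcal{N}(0,\sigma_q^2)$ is independent of everything else, bins the quantized sequence in Wyner--Ziv fashion against each user's own input as side information, and broadcasts the bin index via an i.i.d.\ $\mathcal{N}(0,P_0)$ codebook on the downlink.

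Since all $L$ downlink channels are AWGN with identical noise variance $1$, a single downlink codeword delivers the bin index reliably to every user whenever its rate is below $\tfrac{1}{2}\log(1+P_0)$. The Wyner--Ziv step requires the bin rate to be at least $I(Y_0;\hat{Y}_0\mid X_i)$ for each user $i$; because $\mathrm{Var}(Y_0-X_i)=(L-1)P+1$ under the symmetric Gaussian setup and the test channel is Gaussian, I compute
\begin{equation*}
I(Y_0;\hat{Y}_0\mid X_i) = \tfrac{1}{2}\log\!\left(1 + \tfrac{(L-1)P+1}{\sigma_q^2}\right),
\end{equation*}
which is the same for every user by symmetry, so one binning serves all receivers. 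Matching this to $\tfrac{1}{2}\log(1+P_0)$ gives the tightest compression, i.e.\ $\sigma_q^2 = ((L-1)P+1)/P_0$.

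Once every user recovers $\hat{Y}_0 = \sum_{j=1}^L X_j + N_0 + \hat{N}$ and subtracts its own known input $X_i$, it faces a symmetric $(L-1)$-user Gaussian MAC with per-user power $P$ and effective noise variance $1+\sigma_q^2$. Standard joint-typicality decoding yields reliable recovery of all unknown $W_j$ at equal rate $R$ provided every subset sum-rate bound holds; a short concavity check shows that the binding constraint is the full sum-rate bound, namely $(L-1)R<\tfrac{1}{2}\log\!\bigl(1+(L-1)P/(1+\sigma_q^2)\bigr)$. Substituting $\sigma_q^2=((L-1)P+1)/P_0$ and simplifying yields exactly $R < \tfrac{1}{2(L-1)}\log\!\bigl(1 + (L-1)PP_0/(1+(L-1)P+P_0)\bigr)$, as claimed.

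No step is a genuine obstacle; the only mildly delicate point is that a single Wyner--Ziv binning at the relay must simultaneously succeed for every user's side information, which is justified here because the $X_i$'s are statistically identical and all downlink noises share the same variance, so the same $\sigma_q^2$ and the same bin rate work for all $L$ receivers. The remaining ingredients---the Markov structure $\hat{Y}_0-Y_0-X_i$ needed for Wyner--Ziv to apply, the joint-typicality lemma for the Gaussian MAC, and taking $\epsilon\to 0$ as $n\to\infty$---are routine and inherited from the single-cluster specialization of the scheme in Gündüz et al.
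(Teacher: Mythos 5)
Your derivation is correct, and it checks out line by line: $\operatorname{Var}(Y_0\mid X_i)=(L-1)P+1$ gives the Wyner--Ziv rate $\tfrac{1}{2}\log\bigl(1+((L-1)P+1)/\sigma_q^2\bigr)$; matching this to the common downlink rate $\tfrac{1}{2}\log(1+P_0)$ yields $\sigma_q^2=((L-1)P+1)/P_0$; and since $\tfrac{1}{2k}\log(1+ka)$ is decreasing in $k$, the full sum-rate constraint of the residual $(L-1)$-user MAC is indeed the binding one, which after substitution gives exactly $R_\text{CF}$. Note, however, that the paper itself does not prove this proposition at all --- it simply quotes it as Proposition 4 of G\"und\"uz et al.\ (specialized to a single cluster), so there is no in-paper proof to compare against; what you have written is a self-contained reconstruction of that cited argument. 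The one point worth being slightly more careful about if you were to write this out in full is the step where the recovered quantization codeword $\hat{\boldsymbol{Y}}_0$ is treated as the output of a memoryless channel for the MAC decoding --- this needs the Markov lemma (conditional typicality of $\hat{\boldsymbol{Y}}_0$ with the non-side-information codewords), which is standard in compress-forward proofs but is doing real work and should be invoked explicitly rather than folded into ``routine.''
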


\begin{remark}
It has been shown that the compress-forward coding strategy always achieves a higher equal rate than that achievable by the amplify-forward coding strategy~\cite[Remark 2]{gunduzyener09}. Hence, amplify-forward will not be considered in this paper.
\end{remark}

Although the aforementioned rates were derived for the restricted AWGN MWRC, they are equally applicable for the unrestricted AWGN MWRC considered in this paper.

\subsection{Functional-Decode-Forward}\label{section:fdf}

In this section, we propose a functional-decode-forward coding strategy for the AWGN MWRC. This strategy is based on our previous work on functional-decode-forward for the binary MWRC~\cite{ongjohnsonkellett10cl}, where the uplink transmission is split into $(L-1)$ blocks of $n$ channel uses, and the node pairs $(1,2), (2,3), \dotsc, (L-1,L)$ transmit using binary linear codes in the respective blocks. For the AWGN MWRC, we propose the following modifications for the uplink transmission:
\begin{enumerate}
\item We consider multiple message tuples, $(W_1^{(m)}, W_2^{(m)}, \dotsc, W_L^{(m)})$ for $m \in \{1,2,\dotsc\}$.
\item Instead of having fixed user pairs to transmit for each message tuple (as for the binary case), we \emph{rotate} the transmission scheme such that the pair of users that transmit in block $l \in \{1,2,\dotsc, L-1\}$ for message tuple $m \in \{1,2,\dotsc,\}$ are users $(l + m -2 \mod L) + 1$ and $(l + m - 1 \mod L) + 1$.
\item We use lattice codes for the uplink transmission.
\end{enumerate}
An example of the transmission scheme for $L=5$ for the first five message tuples is shown in Fig.~\ref{fig:scheme} of Appendix~\ref{appendix:example}.

Define $[x]^+ \triangleq \max \{ x, 0\}$.
We now show that functional-decode-forward achieves the following rate:
\begin{theorem}\label{theorem:ifdf}
Consider an AWGN MWRC. Using the functional-decode-forward coding strategy, the following equal rates are achievable:
\begin{align}
R < \min \Bigg\{ &\left[  \frac{1}{2(L-1)}\log \left( \frac{1}{2} +\frac{L}{2}P  \right) \right]^+, \nonumber \\ &  \frac{1}{2(L-1)} \log ( 1 + P_0) \Bigg\} \triangleq R_\text{FDF}. \label{eq:ifdf}
\end{align}
\end{theorem}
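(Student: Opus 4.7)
The plan is to establish the two terms in the minimum separately: the first via nested-lattice decoding of the pairwise modulo-sum at the relay on each uplink block, and the second via a standard Gaussian common-message broadcast on the downlink. The rotation of active pairs serves one purpose only — it lets each user boost its instantaneous transmit power while still meeting the per-user average power constraint $P$.

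First, I would fix an inner block length $n_u$ and group transmissions into super-rounds of $L$ consecutive message tuples, so that each tuple occupies $n = (L-1)n_u$ channel uses over $L-1$ blocks. A direct combinatorial check on the active pair $(((l+m-2)\bmod L)+1,\ ((l+m-1)\bmod L)+1)$ shows that within any single tuple each user is active in either $k_{i,m}=1$ or $k_{i,m}=2$ blocks (the two endpoints of the length-$(L-1)$ cyclic path appear once, the remaining $L-2$ users twice), and that over a full super-round each user is active in exactly $2(L-1)$ of the $L(L-1)$ blocks, i.e., a fraction $2/L$ of the time. Setting the per-block transmit power of each active user to $P' = LP/2$ then satisfies the average-power constraint $P$.

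Second, for the uplink I would invoke the nested-lattice coding theorem for the AWGN two-way relay channel (Nam et al.\ / Wilson et al.) inside each block: with two synchronous transmitters at power $P'$ and relay noise variance $\sigma_0^2 = 1$, the relay can reliably decode the modulo-lattice sum of the two codewords at any per-block rate $R_{\text{block}} < \tfrac{1}{2}\log(\tfrac{1}{2} + \tfrac{L}{2}P)$; the $[\,\cdot\,]^+$ in~\eqref{eq:ifdf} simply absorbs the low-SNR regime where this bound is negative and one falls back to $R = 0$. Because user $i$'s tuple-$m$ message $W_i^{(m)}$ of $nR$ bits must be packed into its $k_{i,m}\,n_u$ active channel uses at per-block rate $R_{\text{block}}$, we get $nR \leq k_{i,m}\,n_u R_{\text{block}}$, and the worst case $k_{i,m} = 1$ (which, by the combinatorics above, every user experiences twice per super-round) forces $R \leq R_{\text{block}}/(L-1)$, the first term of~\eqref{eq:ifdf}. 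For the downlink, the relay must then forward the $L-1$ decoded lattice sums of each tuple — a total of $(L-1)\,n_u R_{\text{block}} = n R_{\text{block}}$ bits — to every user, using the $n$ parallel downlink channel uses (up to a one-tuple block-Markov delay). As a common Gaussian broadcast on identical SNR-$P_0$ links, this is achievable whenever $R_{\text{block}} < \tfrac{1}{2}\log(1 + P_0)$, and combining with $R \leq R_{\text{block}}/(L-1)$ yields the second term of~\eqref{eq:ifdf}. User-side recovery then follows from the observation that the index pairs $(a_l,a_{l+1})$ of a tuple form a path through all $L$ users, so knowing $W_i^{(m)}$ plus the $L-1$ sums lets user $i$ peel off each neighbour along the chain.

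The main technical obstacle I expect is reconciling the fact that, within a single block, the two active users generally have different per-tuple workloads — an endpoint user of that tuple must cram its full $nR$-bit message into one block, whereas an interior user could in principle amortise its message over two blocks — even though the nested-lattice decoder requires both transmitters to use a common lattice codebook at a common rate $R_{\text{block}}$. The standard fix is to have every user transmit a fresh length-$n_u$ lattice codeword of rate $R_{\text{block}}$ in each of its active blocks (e.g., interior users simply repeat the same lattice point, or pad with independent fresh bits), which keeps the per-block rate uniform and achieves $R = R_{\text{block}}/(L-1)$; the remaining ``wasted'' capacity at interior users is exactly what the theorem leaves on the table. A secondary but routine issue is the one-tuple delay between uplink decoding and downlink broadcasting in the full-duplex model, handled by the usual block-Markov argument, so the obstacles are organisational rather than deep.
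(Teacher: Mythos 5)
Your proposal follows essentially the same route as the paper: the same rotation combinatorics (each user active in $2(L-1)$ of the $L(L-1)$ blocks per super-round, hence $P' = \frac{L}{2}P$), the same per-block nested-lattice decoding of the pairwise modulo-sum at the relay, the same downlink broadcast of the cascade of $L-1$ sums under the constraint $(L-1)R < \frac{1}{2}\log(1+P_0)$, and the same chain-peeling recovery along the path of active pairs. One small caution: of your two suggested fixes for the endpoint/interior asymmetry, only ``repeat the same lattice point'' (which is what the paper implicitly does, since $\boldsymbol{X}_i$ depends only on $W_i^{(m)}$) is viable --- padding an interior user's second active block with independent fresh bits would leave the other users unable to peel that block's modulo-sum, because they would not know the fresh codeword needed to subtract it out.
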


\begin{proof}
Consider $n$ uses of an AWGN multiple-access channel $Y_0 = X_1 + X_2 + N_0$, where $X_1$ and $X_2$ are the inputs with power  $\sum_{t=1}^nE[X_1^2[t]]/n = \sum_{t=1}^nE[X_2^2[t]]/n = P'$ and $N_0$ is the independent zero-mean  Gaussian noise with $E[N_0^2] =1$. Let $\mathcal{C}_\text{lattice}$ be an $n$-dimensional lattice code, where $|\mathcal{C}_\text{lattice}|=2^{nR}$. If the transmitters send $\boldsymbol{X}_i = \boldsymbol{V}_i + \boldsymbol{d}_i \mod \Lambda$, for $i \in \{1,2\}$, where $\boldsymbol{V}_i \in \mathcal{C}_\text{lattice}$ are lattice codewords, $\boldsymbol{d}_i$ are independent but fixed dither vectors, and $\!\!\!\!\mod \Lambda$ is the modulo-lattice operation~\cite{erezzamir04}, then the receiver can reliably decode $\boldsymbol{V}_{1,2} \triangleq \boldsymbol{V}_1(W_1) + \boldsymbol{V}_{2}(W_{2}) \mod \Lambda$ from $\boldsymbol{Y}_0$ if~\cite{nazergastpar08eu,erezzamir08israel,wilsonnarayananpfisersprintson10,namchunglee09,nazergastpar11}
\begin{equation}
R < \left[\frac{1}{2} \log \left( \frac{1}{2} + P' \right) \right]^+. \label{eq:fdf-2-users}
\end{equation}

\indent\indent{\em Uplink:}
Now, for the AWGN MWRC, each user $i$ bijectively maps its message $W_i^{(m)} \in \{1,2,\dotsc, 2^{nR}\}$ to a lattice codeword $\boldsymbol{V}_i(W_i^{(m)}) \in \mathcal{C}_\text{lattice}$. For message tuple $m$, in block $l$, nodes $(l + m -2 \mod L) + 1$ and $(l + m - 1 \mod L) + 1$ transmit, and all the other nodes do not transmit, i.e.,
\begin{equation}
\boldsymbol{X}_{i+1}(W_{i+1}^{(m)}) = \begin{cases}
\boldsymbol{V}_{i+1}(W_{i+1}^{(m)}) + \boldsymbol{d}_{i+1} \mod \Lambda, \\
\hfill \text{if } i= l + m -2 \mod L\\
\hfill \text{or } i= l + m - 1 \mod L\\
 \boldsymbol{0}, \quad\quad\quad\quad\quad\quad\quad\quad\quad\quad\quad\quad  \text{otherwise},
\end{cases}\label{eq:fdf-scheme-2}
\end{equation}
where $\boldsymbol{0}$ is the all-zero vector.

The aforementioned transmission scheme repeats itself after every $L$ message tuples. Consider a window of $L$ message tuples, e.g., from the first tuple to the $L$-th tuple. As there are $L-1$ blocks of transmission for each message tuple, there are all together $L(L-1)$ blocks of transmission. Since we cycle the transmission scheme according to \eqref{eq:fdf-scheme-2}, each node transmits in one block for two of the $L$ message tuples, and transmits in two blocks for the other $(L-2)$ of the $L$ message tuples. For the example of $L=5$, see Appendix~\ref{appendix:example}. If each user transmits with $\sum_{\text{one block}}E[X_i^2[\cdot]]/n = \frac{L(L-1)}{2 + 2(L-2)}P = \frac{LP}{2}$ when active, the average transmitted power is then $\sum_{t=1}^{nL(L-1)}E[X_i^2[t]]/{(nL(L-1))}=P$. So, we choose $P' = \frac{L}{2}P$. Note that had we used the same scheme for the binary MWRC, i.e., fixing the pairs $(1,2), (2,3), \dotsc, (L-1,L)$ to transmit for every message tuple, then each user could only transmit at the average power of $\frac{L-1}{2}P$, because all users must transmit at the same power using the same lattice code, and nodes $2,3,\dotsc,L-1$ each transmit in two of the $L-1$ blocks for each message tuple.

For any message tuple $t$, the user pairs that transmit in the $(L-1)$ blocks are $\{ (1,2), (2,3),$ $(3,4), \dotsc, (L-1,L), (L,1)\}$ except for the pair $([t + L-2 \mod L] + 1, [t+L-1 \mod L] + 1)$. In each block where a pair of users, say users $i$ and $j$, transmit, the relay decodes the function $\boldsymbol{V}_{i,j} \triangleq \boldsymbol{V}_i(W_i) + \boldsymbol{V}_{j}(W_{j}) \mod \Lambda$. From \eqref{eq:fdf-2-users}, we know that if
\begin{equation}
R < \left[\frac{1}{2(L-1)} \log \left( \frac{1}{2} + \frac{L}{2}P \right) \right]^+,\label{eq:fdf-uplink}
\end{equation}
then the relay can reliably decode the function $\boldsymbol{V}_{i,j}$ in each block. The factor $\frac{1}{L-1}$ takes into account that there are $(L-1)$ blocks for each message, and $P'=\frac{L}{2}P$ is the transmitted power of each active user in each block.

\indent\indent{\em Downlink:}
Note that each function $\boldsymbol{V}_{i,j} \in \mathcal{C}_\text{lattice}$, where $|\mathcal{C}_\text{lattice}| = 2^{nR}$. For a message tuple, let the cascade of the functions that the relay decodes in all the $(L-1)$ blocks be $\mathbb{W}$, which can be indexed by $U \in \{1,2,\dotsc, 2^{n(L-1)R}\}$.
On the downlink, for each message tuple, the relay broadcasts the corresponding $U$ to all the users. As each downlink channel is a point-to-point AWGN channel, each user can reliably decode $U$ if
\begin{equation}
(L-1)R < \frac{1}{2} \log (1 + P_0). \label{eq:fdf-downlink}
\end{equation}

\indent\indent{\em Decoding of Other Users' Messages:}
Now, consider the first message tuple, $m=1$. Suppose that each user has decoded $U$ sent by the relay. Since the mapping from $\mathbb{W}$ to $U$ is bijective, each user can recover $\mathbb{W} = (\boldsymbol{V}_{1,2}, \boldsymbol{V}_{2,3}, \dotsc, \boldsymbol{V}_{L-1,L})$. From $\mathbb{W}$ and the user's own lattice codeword $\boldsymbol{V}_i(W_i)$, the user can recover the codewords of all the other users~\cite{ongjohnsonkellett10cl}.
Since the mapping from $W_i$ to $\boldsymbol{V}_i(W_i)$ is also bijective for all $i$, every user can recover all other users' messages. It can be shown that for every message tuple, the aforementioned operation can be performed because the user pairs that transmit are $\{ (1,2), (2,3),$ $(3,4), \dotsc, (L-1,L), (L,1)\}$ less one.

Combining \eqref{eq:fdf-uplink} and \eqref{eq:fdf-downlink}, we get Theorem~\ref{theorem:ifdf}.
\end{proof}

\begin{remark}
Note that the strategy proposed here is different from the strategy described by G\"und\"uz et al.~\cite[Section IV. B.]{gunduzyener09}, which also uses lattice codes, where there is more than one cluster with two users in each cluster, and only the two users in each cluster exchange messages. When only two users exchange information, the previously proposed functional-decode-forward strategy for the TWRC~\cite{wilsonnarayananpfisersprintson10,namchunglee09} can be used. In the MWRC considered in this paper, there is only one cluster with $L$ users and all users engage in full data exchange.
\end{remark}

\begin{remark}
Setting $L=2$ for Theorem~\ref{theorem:ifdf} we recover the result for the TWRC~\cite{wilsonnarayananpfisersprintson10}.
\end{remark}

\begin{remark}
Note that the functional-decode-forward strategy relies on the condition that each user can form $L-1$ linearly independent equations consisting of the other users' messages (after removing its own message). This suggests $L-1$ blocks of transmission on the uplink. In each block, we are free to choose the set of users that transmit, as long as the linear-independence condition can be met for each user. On the one hand, if $K>2$ users transmit simultaneously in each block (using lattice codes), for the relay to decode the modulo-sum of the $K$ codewords, we impose $R < \frac{1}{2(L-1)} \log \left( \frac{1}{K} + P' \right)$~\cite{erezzamir08israel}. Furthermore, the more blocks in which a user transmits, the lower its transmitted power $P'$ to maintain an average of $P$. On the other hand, if only one user transmits in one of the blocks, after removing its codewords, the user will not get $L-1$ linearly independent equations from the relay. So, the minimum number of simultaneous transmitted codewords in each block is two if we fix the number of blocks to be $L-1$. In fact, if we have $L$ blocks and let one user transmit in each block, we get the complete-decode-forward rate [c.f. \eqref{eq:cdf}].
\end{remark}

\begin{remark}
Time-division multiple-access (TDMA) is in general sub-optimal in terms of spectral efficiency. However, one can show that in the symmetrical AWGN multiple-access channel, where all transmitters have the same power constraint and transmit at the same rate, TDMA is optimal (note that simultaneous transmission is always optimal in the multiple-access channel). We indeed have this configuration for the uplink of the AWGN MWRC. 
It is worth noting that functional-decode-forward, which uses TDMA on the uplink, achieves an uplink {\em pre-log factor} of $\frac{1}{2(L-1)}$ [see the first term on the RHS of \eqref{eq:ifdf}], which is higher than the pre-log factor $\frac{1}{2L}$ of the complete-decode-forward [see the first term on the RHS of \eqref{eq:cdf}], which uses simultaneous transmission on the uplink. In fact, functional-decode-forward achieves the pre-log factor of that of the uplink upper bound.
\end{remark}

\section{The Equal-Rate Capacity}\label{section:capacity}

We first show that complete-decode-forward and functional-decode-forward can achieve the equal-rate capacity upper bound. Let $R_\text{UB}' \triangleq \frac{1}{2(L-1)}\log(1+P_0)$, i.e., the last term on the RHS of \eqref{eq:upper-bound}. Note that $R_\text{UB} \leq R_\text{UB}'$. So, any coding strategy that achieves $R_\text{UB}'$ also achieves $C$.  

 Next, we show a sufficient condition for the rate $R_\text{UB}'$ to be achievable.
\begin{theorem} \label{theorem:capacity-general}
Consider an AWGN MWRC.  If
\begin{equation}
P_0 \leq \max \left\{ \frac{LP -1}{2}, (1 + LP) ^{\left( \frac{L-1}{L} \right)} -1 \right\},
\end{equation}
then the equal-rate capacity is
\begin{equation}
C = \frac{1}{2(L-1)} \log (1 + P_0).
\end{equation}
\end{theorem}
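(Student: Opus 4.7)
My plan is to combine the cut-set upper bound from Lemma~\ref{lemma:binary-upper-bound} with the achievability results for CDF (Proposition~\ref{theorem:cdf}) and FDF (Theorem~\ref{theorem:ifdf}), and observe that under the stated hypothesis, the downlink cut becomes the binding constraint for at least one of these two strategies.

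For the converse, the bound $R_\text{UB}'=\tfrac{1}{2(L-1)}\log(1+P_0)$ is the last term in the minimum defining $R_\text{UB}$, so $C\le R_\text{UB}\le R_\text{UB}'$ is immediate. The substantive step is achievability. I would handle the two arguments of the $\max$ separately.

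\textbf{Case 1 (FDF regime).} Assume $P_0 \le (LP-1)/2$. Rearranging gives $1+P_0 \le \tfrac{1}{2}+\tfrac{L}{2}P$, and taking $\tfrac{1}{2(L-1)}\log(\cdot)$ on both sides yields
\[
R_\text{UB}' \;\le\; \frac{1}{2(L-1)}\log\!\left(\frac{1}{2}+\frac{L}{2}P\right).
\]
Hence the second term in the minimum defining $R_\text{FDF}$ in \eqref{eq:ifdf} is the binding one, so Theorem~\ref{theorem:ifdf} gives $R_\text{FDF}=R_\text{UB}'$, and FDF achieves the upper bound.

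\textbf{Case 2 (CDF regime).} Assume $P_0 \le (1+LP)^{(L-1)/L}-1$. Then $1+P_0 \le (1+LP)^{(L-1)/L}$; taking logarithms and dividing by $2(L-1)$ gives
\[
\frac{1}{2(L-1)}\log(1+P_0) \;\le\; \frac{1}{2L}\log(1+LP),
\]
so the downlink term in \eqref{eq:cdf} is binding and Proposition~\ref{theorem:cdf} yields $R_\text{CDF}=R_\text{UB}'$, i.e.\ CDF achieves the upper bound.

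Since the hypothesis is a disjunction (the $\max$), in either case some coding strategy achieves the equal rate $R_\text{UB}'$, giving $C\ge R_\text{UB}'$. Combined with the converse $C\le R_\text{UB}\le R_\text{UB}'$, we conclude $C=R_\text{UB}'$. There is no real obstacle here: the proof is a bookkeeping argument showing that the two hypotheses in the $\max$ correspond exactly to the two regimes in which FDF and CDF, respectively, have their downlink constraint active. The only thing worth double-checking is the algebraic equivalence between the condition $P_0\le(LP-1)/2$ (respectively $P_0\le(1+LP)^{(L-1)/L}-1$) and the dominance of the downlink term over the uplink term inside the FDF (respectively CDF) minimum, which is the routine manipulation shown above.
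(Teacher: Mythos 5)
Your proposal is correct and follows essentially the same route as the paper: bound $C$ above by $R_\text{UB}'=\tfrac{1}{2(L-1)}\log(1+P_0)$, then show that each arm of the $\max$ makes the downlink term binding in $R_\text{FDF}$ or $R_\text{CDF}$ respectively, so that one of the two strategies attains $R_\text{UB}'$. The algebraic manipulations you flag as worth checking are exactly the ones the paper performs, and they go through.
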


\begin{proof}[ Proof of Theorem~\ref{theorem:capacity-general}]
Note that $L \geq 2$.
If $P_0 \leq \frac{LP -1}{2}$, then $\frac{1}{2} + \frac{L}{2}P \geq 1 + P_0$, and hence $R_\text{FDF} = \frac{1}{2(L-1)} \log (1 + P_0) = C$. In addition, if $P_0 \leq (1 + LP) ^{\left( \frac{L-1}{L} \right)} -1$, then $\frac{1}{2(L-1)} \log ( 1 + P_0) \leq \frac{1}{2L} \log ( 1 + LP)$, and $R_\text{CDF} = \frac{1}{2(L-1)} \log (1 + P_0) = C$.
\end{proof}

\begin{remark} \label{remark:cdf-fdf-ub}
For any $L \geq 2$, we have that
\begin{subequations}
\begin{align}
&\max \left\{\frac{1}{2L} \log ( 1 + LP),\left[  \frac{1}{2(L-1)}\log \left( \frac{1}{2} +\frac{L}{2}P  \right) \right]^+ \right\} \nonumber \\ &< \frac{1}{2(L-1)} \log ( 1 + (L-1)P) \\ & \leq \frac{1}{2\ell} \log ( 1 + \ell P ) \leq \frac{1}{2\ell} \log ( 1 + \ell^2 P ), \label{eq:ub-of-ub}
\end{align}
\end{subequations}
for all  $\ell \in \{1,2,\dotsc,L-1\}$, because $\frac{1}{2\ell} \log ( 1 + \ell P )$ is a monotonically decreasing function of $\ell$. This means the first terms on the RHS of \eqref{eq:cdf} and \eqref{eq:ifdf} are strictly less than that of \eqref{eq:upper-bound}. So, complete-decode-forward or functional-decode-forward can achieve the equal-rate capacity upper bound only when $R_\text{UB} = R_\text{UB}'$, and when $R_\text{UB}'$ is achievable.
\end{remark}

Now, we investigate two special cases of equal transmitted power, and of scaling of the relay's transmitted power with the number of users.

\subsection{Equal Transmitted Power for Relay and Users}\label{cap}

In this section, we consider the case where the transmitted power of all users and the relay is equal, i.e., $P_0=P$. This means the SNR is $P$ for all nodes.\footnote{The case of equal transmitted power for $L=2$ (two users) was also considered by Wilson et al.~\cite{wilsonnarayananpfisersprintson10}.} Under this condition, the equal-rate capacity upper bound simplifies to $R_\text{UB} = \frac{1}{2(L-1)} \log ( 1 + P)$.

\begin{remark}
Note that $L$ is an integer greater than one. For any fixed $P$, as $L \rightarrow \infty$, $R_\text{UB} \rightarrow 0$. Any coding strategy that achieves a non-zero equal rate can be said to \emph{approach} the capacity upper bound (in an absolute sense) as $L$ increases. So, in this paper, we consider non-trivial cases of fixed $L$ and increasing $P$.
\end{remark}

 We have the following capacity result:
\begin{theorem}\label{theorem:cap}
Consider the AWGN MWRC with $P_0 = P$.
\begin{enumerate}
\item For $L \geq 3$: The equal-rate capacity is
\begin{equation}
C = \frac{1}{2(L-1)}\log ( 1 + P). \label{eq:cr-cap}
\end{equation}
\begin{itemize}
\item If $0 < P \leq 1$, the capacity is achievable by complete-decode-forward.
\item If $P \geq 1$, the capacity is achievable by functional-decode-forward.
\end{itemize}
\item For $L =2$:
\begin{equation}
R_\text{FDF} = \left[ \frac{1}{2}\log \left( \frac{1}{2} + P \right) \right]^+  \geq C - \epsilon(P),
\end{equation}
where $\epsilon(P) = \min \left\{ \frac{1}{2}, \frac{1}{2(2P+1) \ln 2} \right\}$.
\end{enumerate}
\end{theorem}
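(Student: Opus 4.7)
My plan has three distinct strands. The converse for both claims sits on Lemma~\ref{lemma:binary-upper-bound}: under $P_0 = P$, I would first show that the downlink term dominates the min. For each $\ell \in \{1,\ldots,L-1\}$, since $\ell^2 \geq 1$ and $L-1 \geq \ell$, we have $(1+\ell^2 P)^{L-1} \geq (1+P)^{L-1} \geq (1+P)^\ell$. Taking logarithms and dividing by $2\ell(L-1)$ gives $\frac{1}{2\ell}\log(1+\ell^2 P) \geq \frac{1}{2(L-1)}\log(1+P)$. Hence $R_\text{UB} = \frac{1}{2(L-1)}\log(1+P)$, which is both the claimed capacity expression (for $L \geq 3$) and the converse used in the $L=2$ gap bound.

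For $L \geq 3$, I would invoke Theorem~\ref{theorem:capacity-general} with $P_0 = P$. Its hypothesis becomes $P \leq \max\{(LP-1)/2,\,(1+LP)^{(L-1)/L}-1\}$, and it suffices to show that one branch always holds. In the range $P \geq 1$ the first option reduces to $(L-2)P \geq 1$, which is immediate for $L \geq 3$; this is the FDF regime, so FDF attains capacity. In the range $0 < P \leq 1$ the second option is equivalent to $(1+P)^L \leq (1+LP)^{L-1}$. Setting $f(P) = (L-1)\ln(1+LP) - L\ln(1+P)$, one checks $f(0) = 0$ and
\[
f'(P) = \frac{L(L-2-P)}{(1+LP)(1+P)},
\]
so $f'(P) \geq 0$ whenever $P \leq L-2$, an interval that contains $(0,1]$ for all $L \geq 3$. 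Thus $f(P) \geq 0$ on $(0,1]$, and Proposition~\ref{theorem:cdf} delivers capacity via CDF.

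For $L = 2$, I would bound the gap $R_\text{UB} - R_\text{FDF}$ directly. Writing $R_\text{FDF} = [\tfrac{1}{2}\log(\tfrac{1}{2}+P)]^+ \geq \tfrac{1}{2}\log(\tfrac{1}{2}+P)$ (valid whether or not the argument is positive), the gap is bounded uniformly by
\[
R_\text{UB} - R_\text{FDF} \;\leq\; \tfrac{1}{2}\log(1+P) - \tfrac{1}{2}\log\bigl(\tfrac{1}{2}+P\bigr) \;=\; \tfrac{1}{2}\log\!\left(1 + \tfrac{1}{1+2P}\right).
\]
Two elementary estimates finish the job: first, $1 + \tfrac{1}{1+2P} \leq 2$ for all $P \geq 0$, so the gap is at most $\tfrac{1}{2}$; second, $\log(1+x) \leq x/\ln 2$ gives the gap at most $\tfrac{1}{2(1+2P)\ln 2}$. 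Taking the minimum yields exactly $\epsilon(P)$, and combining with $C \leq R_\text{UB}$ and $C \geq R_\text{FDF}$ gives $R_\text{FDF} \geq C - \epsilon(P)$.

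The only genuinely subtle step is the unification in the last paragraph: at first glance one expects to split $L=2$ into $P < 1/2$ (where $R_\text{FDF} = 0$) and $P \geq 1/2$, but dropping the $[\cdot]^+$ produces a single clean bound covering both cases, since for $P < 1/2$ one has $\tfrac{1}{1/2+P} > 1$ and the inequality $\tfrac{1}{2}\log(1+P) \leq \tfrac{1}{2}\log\tfrac{1+P}{1/2+P}$ still holds. The derivative sign check for $f$ in the middle paragraph is routine once set up, and the rest is bookkeeping on the cut-set bound.
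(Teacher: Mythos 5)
Your proposal is correct and follows the same overall skeleton as the paper's proof: establish $R_\text{UB}=\tfrac{1}{2(L-1)}\log(1+P)$ from the cut-set bound, use functional-decode-forward for $P\geq 1$ via $\tfrac{1}{2}+\tfrac{L}{2}P\geq 1+P$, use complete-decode-forward for $0<P\leq 1$ via the condition $(1+LP)^{L-1}\geq(1+P)^L$, and close the $L=2$ gap with a first-order bound on the logarithm (your $\log(1+x)\leq x/\ln 2$ with $x=\tfrac{1}{1+2P}$ is the same tangent-line estimate the paper applies as $\log(x+\delta)<\log x+\tfrac{\delta}{x\ln 2}$). The one place you genuinely diverge is the low-SNR CDF step: the paper compares $\alpha(L,P)=\bigl(\tfrac{1+LP}{1+P}\bigr)^{L-1}$ against $\beta(P)=1+P$, computes first and second derivatives, argues the existence of a crossover point $P^*(L)$ from the sign pattern of $\tfrac{d^2}{dP^2}\alpha$, and then verifies $P=1$ lies below it. You instead take logarithms, set $f(P)=(L-1)\ln(1+LP)-L\ln(1+P)$, and observe $f(0)=0$ with $f'(P)=\tfrac{L(L-2-P)}{(1+LP)(1+P)}\geq 0$ on $(0,L-2]\supseteq(0,1]$, so $f\geq 0$ there. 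Your version is shorter and avoids both the second derivative and the existence argument for $P^*(L)$; what it gives up is the extra structural information the paper extracts (a single threshold $P^*(L)\geq 1$ beyond which CDF falls strictly below the upper bound), which the paper reuses in spirit for Theorem~\ref{theorem:common-addition}. Your handling of the $[\cdot]^+$ for $L=2$ by simply dropping it is valid and covers both the $P<1/2$ and $P\geq 1/2$ cases in one line, matching the paper's conclusion.
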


\begin{remark}
Note that  $\lim_{P \rightarrow \infty} \epsilon(P) = 0$ and $\lim_{P \rightarrow \infty} \frac{\epsilon(P)}{R_\text{UB}}= 0$. So, functional-decode-forward achieves the equal-rate capacity asymptotically as $P$ increases in an absolute sense as well as in a normalized (to the upper bound) sense.
\end{remark}

\begin{remark}
In complete-decode-forward and functional-decode-forward, each user's transmitted signals depend only on the user's messages So, $R_\text{CDF}$ and $R_\text{FDF}$ are also achievable on the restricted AWGN MWRC. Since the equal-rate capacity of the restricted AWGN MWRC must be upper bounded by $C$, we incidentally obtain the equal-rate capacity of the restricted  AWGN MWRC for $L\geq 3$ and $P_0 = P$.
\end{remark}

\begin{remark}
While the observation that functional-decode-forward achieves the capacity asymptotically for $L=2$ was made previously~\cite{wilsonnarayananpfisersprintson10,namchunglee09},  Theorem~\ref{theorem:cap}  provides an explicit upper bound on the gap.
\end{remark}

\begin{proof}[Proof of Theorem~\ref{theorem:cap}]
First consider $L \geq 3$. When $P \geq 1$, we have $\frac{L}{2}P \geq \frac{1}{2} + P$, and hence $\frac{1}{2} + \frac{L}{2}P \geq 1 + P$. So, $R_\text{FDF} = R_\text{UB} = C$.

From the proof of Theorem~\ref{theorem:capacity-general}, we know that the condition under which $R_\text{CDF} = R_\text{UB}$ is when $P \leq (1+LP)^{\frac{L-1}{L}} -1$, or equivalently, $\left(  \frac{1+LP}{1+P} \right)  ^{L-1} \geq 1+P$. Defining $\alpha(L,P) = \left(  \frac{1+LP}{1+P} \right)  ^{L-1}$ and $\beta(P) = 1+P$, we now show that if $0 < P \leq 1$, then $\alpha(L,P) \geq \beta(P)$. Note that $\alpha(L,0)=\beta(0)=1$, and $\frac{d}{dP}\beta(P) =1$ for  all $P$. In addition,
\begin{align}
\frac{d}{dP} \alpha(L,P) &= (L-1)^2 (1+P)^{-L} (1+LP)^{L-2} > 0\\
\frac{d^2}{dP^2} \alpha(L,P) &= (L-1)^2L(1+P)^{-L-1}(1+LP)^{L-3} \nonumber \\ &\quad \times (L-3-2P).
\end{align}
Recall that $L \geq 3$. We have $\left. \frac{d}{dP} \alpha(L,P) \right\vert_{P=0} > 1$, $\frac{d^2}{dP^2} \alpha(L,P)$ decreases as $P$ increases, and $\frac{d^2}{dP^2} \alpha(L,P) < 0$ when $P>\frac{L-3}{2}$. So, there exists a point $P^*(L)>0$ where $\alpha(L,P) \geq \beta(P)$ for $P \leq P^*(L)$, and $\alpha(L,P) < \beta(P)$ for $P > P^*(L)$. Now, fixing $P=1$, we have 
\begin{subequations}
\begin{align}
(1+L)/2 &\geq 2\\
\left(\frac{1+LP}{1+P}\right)^{L-1} &\geq 1+P\\
\alpha(L,P) &\geq \beta(P).
\end{align}
\end{subequations}
This means $P=1$ falls into the region in which $\alpha(L,P) \geq \beta(P)$, meaning $P^*(L) \geq 1$. So, $R_\text{CDF} = R_\text{UB}$ for $L \geq 3$ and $0 < P \leq 1$.

Next, for $L = 2$, we have $R_\text{FDF} = \frac{1}{2}\log \left( \frac{1}{2} + P \right)$ as the first term is smaller than the second term in the RHS of \eqref{eq:ifdf}. Note that $\frac{d}{dx} \log x = \frac{1}{x \ln 2}$ and $\frac{d^2}{dx^2} \log x = - \frac{1}{x^2 \ln 2} < 0$.
So,
\begin{subequations}
\begin{align}
\log \left(x+ \delta \right) &< \log x + \left. \frac{d}{dy} \log y \right\vert_{y=x} \left( \left(x+\delta\right) - x \right)\\
&= \log x + \frac{\delta}{x \ln 2}. \label{eq:log-gap}
\end{align}
\end{subequations}
Hence,
\begin{subequations}
\begin{align}
C &\leq R_\text{UB}\\
&< \frac{1}{2} \log \left( \frac{1}{2} + P \right) + \frac{1}{2} \frac{\frac{1}{2}}{\left(P + \frac{1}{2} \right) \ln 2} \\
&= R_\text{FDF} + \frac{1}{2\left(2P+1 \right) \ln 2}. \label{eq:fdf-ub-gap}
\end{align}
\end{subequations}

Furthermore, $ C \leq R_\text{UB} = \frac{1}{2} \log ( 1 + P ) = \frac{1}{2} \log \left(2 \left( \frac{1}{2} + \frac{P}{2} \right) \right) =\frac{1}{2} \log  \left( \frac{1}{2} + \frac{P}{2} \right)  + \frac{1}{2} < R_\text{FDF} + \frac{1}{2}$.

\end{proof}

In addition to the capacity results in Theorem~\ref{theorem:cap}, we have the following theorem:

\begin{theorem} \label{theorem:common-addition}
Consider the AWGN MWRC with $P_0=P$ and $L \geq 3$. We have the following:
\begin{enumerate}
\item Complete-decode-forward:
\begin{enumerate}
\item $R_\text{CDF} < C$, if $P \geq L^{L-1} -1$;
\item $R_\text{CDF} = C  -\mathcal{O}(\log P)$, as $P \rightarrow \infty$.
\end{enumerate}
\item Compress-Forward:
\begin{enumerate}
\item $R_\text{CF} < C$, for all $P$;
\item $R_\text{CF} \rightarrow  C - \frac{1}{2(L-1)}  \log \left( 1 + \frac{1}{(L-1)} \right)$, as $P \rightarrow \infty$.
\end{enumerate}
\end{enumerate}
\end{theorem}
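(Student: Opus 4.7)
The plan is to substitute $P_0 = P$ throughout and compare $R_\text{CDF}$ and $R_\text{CF}$ directly to the capacity $C = \frac{1}{2(L-1)}\log(1+P)$ from Theorem~\ref{theorem:cap}, handling the four sub-claims by short algebraic arguments and taking asymptotics by straightforward expansions. A key observation for the CDF parts is that with $P_0 = P$ the second argument of the $\min$ defining $R_\text{CDF}$ coincides with $C$, so the CDF claims reduce entirely to when the first argument binds, i.e., when $(1+LP)^{L-1} < (1+P)^L$, equivalently $\alpha(L,P) < \beta(P)$ in the notation of the proof of Theorem~\ref{theorem:cap}. The single-crossing structure of $\alpha-\beta$ is already established there: there is a unique $P^*(L)>0$ with $\alpha > \beta$ on $(0,P^*(L))$ and $\alpha < \beta$ on $(P^*(L),\infty)$.

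For (1a) I would evaluate $\alpha$ at the threshold $P = L^{L-1}-1$, where $1+P=L^{L-1}$, giving $\alpha(L,P) = \bigl(L - (L-1)/L^{L-1}\bigr)^{L-1} < L^{L-1} = \beta(P)$; this pins $P^*(L) < L^{L-1}-1$ and hence $R_\text{CDF}<C$ throughout $P \geq L^{L-1}-1$. For (1b), once $P$ is past $P^*(L)$ the first argument of the $\min$ binds, so $R_\text{CDF} = \frac{1}{2L}\log(1+LP)$; expanding gives $C - R_\text{CDF} = \frac{1}{2L(L-1)}\log P - \frac{\log L}{2L} + o(1)$, which is $\Theta(\log P)$ and in particular $\mathcal{O}(\log P)$. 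For CF, substituting $P_0=P$ yields $R_\text{CF} = \frac{1}{2(L-1)}\log\!\bigl((1 + LP + (L-1)P^2)/(1+LP)\bigr)$. Part (2a) simplifies after clearing logs and the common denominator to $-P < 1$, which is immediate for $P>0$. For (2b) I would write $R_\text{CF}-C = \frac{1}{2(L-1)}\log\!\bigl((1+LP+(L-1)P^2)/((1+LP)(1+P))\bigr)$ and let $P \to \infty$; the rational function inside has leading term $(L-1)P^2 / (LP^2)$, so the ratio tends to $(L-1)/L$ and the limit is $-\frac{1}{2(L-1)}\log\!\bigl(1+\tfrac{1}{L-1}\bigr)$.

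The only non-routine step is (1a): one needs a clean comparison between $(1+LP)^{L-1}$ and $(1+P)^L$. The $\alpha,\beta$ framework from the proof of Theorem~\ref{theorem:cap} already supplies the single-crossing property, so the whole argument reduces to the one-line inequality at $P = L^{L-1}-1$, where the choice of threshold becomes transparent: $1+P = L^{L-1}$ turns the base $(1+LP)/(1+P)$ into $L$ minus the small positive term $(L-1)/L^{L-1}$, and its $(L-1)$th power then sits strictly below $\beta(P)=L^{L-1}$. Every other step in the theorem is routine calculation.
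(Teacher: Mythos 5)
Your proposal is correct, and for parts (1b), (2a) and (2b) it is in substance the same computation as the paper's Appendix~B: the CF rate is rewritten as $C$ minus the explicit positive gap $\frac{1}{2(L-1)}\log\bigl(1+\frac{1}{(L-1)+1/P}\bigr)$, which immediately gives both (2a) and the limit in (2b), and the CDF gap is $\frac{1}{2L(L-1)}\bigl(\log(1+P)-(L-1)\log\frac{1+LP}{1+P}\bigr)=\Theta(\log P)$. Where you genuinely diverge is (1a). The paper never invokes the single-crossing property of $\alpha-\beta$; it bounds $\bigl(\frac{1+LP}{1+P}\bigr)^{L-1}<L^{L-1}\leq 1+P$ directly for every $P\geq L^{L-1}-1$ (using only $\frac{1+LP}{1+P}<L$), which makes the gap strictly positive uniformly on that range in one line and keeps the argument self-contained. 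Your route --- evaluating $\alpha$ and $\beta$ at the threshold $P=L^{L-1}-1$, where $\alpha=\bigl(L-\frac{L-1}{L^{L-1}}\bigr)^{L-1}<L^{L-1}=\beta$, and then propagating the strict inequality to all larger $P$ via the single-crossing of $\alpha-\beta$ established in the proof of Theorem~\ref{theorem:cap} --- is also valid and has the merit of making the origin of the threshold transparent; its only cost is that it inherits whatever rigor one ascribes to the single-crossing claim (argued somewhat informally in the paper from the sign of $\frac{d^2}{dP^2}\alpha$), whereas the paper's direct bound needs nothing beyond an elementary comparison of the base with $L$.
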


\begin{proof}
The proof is presented in Appendix~\ref{appendix:common-addition}.
\end{proof}

The aforementioned theorem implies that when there are three or more users, of the coding strategies studied in this paper, only functional-decode-forward achieves the equal-rate capacity in the high SNR regime.

\subsection{Transmitted Power for Relay Scales with $L$}

In this section, we consider the case where the transmitted power of all users is equal but the transmitted power of the relay scales with the number of users, i.e., $P_0=LP$. This can be used to model networks in which a line-powered base station (the relay) serves multiple mobile users, and we can increase the transmitted power of the base station to accommodate the addition of new users. In this case of scaling the relay's transmitted power, the SNR at the relay is $P$, but the SNR at the user is $LP$. Unless otherwise stated, we take SNR to mean the SNR at the relay, which is $P$.

For the case of scaling power, we have $R_\text{CDF} = \frac{1}{2L} \log ( 1 + LP)$, and $R_\text{FDF} = \left[\frac{1}{2(L-1)} \log\left( \frac{1}{2} + \frac{L}{2}P\right) \right]^+$. It follows from Remark~\ref{remark:cdf-fdf-ub} that $\max\{R_\text{CDF},R_\text{FDF}\} < R_\text{UB}$. It is also possible to show that $R_\text{CF} < R_\text{UB}$ (see Appendix~\ref{sec:cf-sub}). Therefore, none of the coding strategies discussed in this paper can achieve $R_\text{UB}$. However, we will investigate how close a rate we can achieve compared to the upper bound $R_\text{UB}' = \frac{1}{2(L-1)}\log(1+LP)$. Since $R_\text{UB}' \rightarrow 0$, as $L \rightarrow \infty$, we will consider fixed $L$ and increasing $P$.

\begin{theorem} \label{theorem:scaling}
Consider the AWGN MWRC with $P_0=LP$. We have the following:
\begin{enumerate}
\item Functional-decode-forward:
\begin{enumerate}
\item $R_\text{FDF} > C - \frac{1}{2(L-1)\ln 2}$, for all $P$.
\end{enumerate}
\item Complete-decode-forward:
\begin{enumerate}
\item $R_\text{CDF} < R_\text{UB}$, for all $P$;
\item $R_\text{CDF} < C - \mathcal{O}(\log P)$,  as $P \rightarrow \infty$.
\end{enumerate}
\item Compress-forward:
\begin{enumerate}
\item $R_\text{CF} \geq C - \frac{1}{2(L-1)} \log \left( \frac{1 + (2L-1)P}{1 + (L-1)P} \right)$;
\item $R_\text{CF} \geq C - \frac{1}{2(L-1)} \log\left( 1 + \frac{L}{L-1}\right)$, as $P \rightarrow \infty$.
\end{enumerate}
\end{enumerate}
\end{theorem}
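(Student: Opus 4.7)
The plan is to verify each of the three parts by relating the relevant achievable rate to the downlink cut $R_\text{UB}' \triangleq \frac{1}{2(L-1)}\log(1+LP)$, which, once $P_0=LP$ is substituted, satisfies $C \leq R_\text{UB} \leq R_\text{UB}'$. Under this substitution, the first term of $R_\text{FDF}$ in Theorem~\ref{theorem:ifdf} dominates the downlink term, so $R_\text{FDF} = \frac{1}{2(L-1)} \log \left( \frac{1 + LP}{2} \right)$, and $R_\text{CDF} = \frac{1}{2L} \log(1+LP)$ since $\frac{1}{2L} < \frac{1}{2(L-1)}$.

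For Part~1, I would compute $R_\text{UB}' - R_\text{FDF} = \frac{1}{2(L-1)}\log 2 = \frac{1}{2(L-1)}$ directly; since $\ln 2 < 1$, this is strictly less than $\frac{1}{2(L-1)\ln 2}$, and combining with $C \leq R_\text{UB}'$ yields the bound. (Equivalently, the tangent-line inequality~\eqref{eq:log-gap} applied at $x=\frac{1+LP}{2}$ with $\delta = \frac{1+LP}{2}$ produces the same gap.) Part~2(a) is immediate: $R_\text{CDF} < R_\text{UB}'$ from $\frac{1}{2L} < \frac{1}{2(L-1)}$, and Remark~\ref{remark:cdf-fdf-ub} places $R_\text{CDF}$ strictly below every uplink term in the minimum defining $R_\text{UB}$. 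For Part~2(b), the natural upper-bound route gives an inequality in the wrong direction, so I would use the \emph{lower} bound $C \geq R_\text{FDF}$ from achievability and compute
\begin{equation*}
R_\text{FDF} - R_\text{CDF} = \frac{\log(1+LP)}{2L(L-1)} - \frac{1}{2(L-1)},
\end{equation*}
which is $\Theta(\log P)$ as $P \to \infty$, giving $C - R_\text{CDF} \geq R_\text{FDF} - R_\text{CDF} = \Theta(\log P)$.

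For Part~3, the key algebraic observation is the factorization $1+(2L-1)P+L(L-1)P^2 = (1+LP)(1+(L-1)P)$, obtained by recognizing $L$ and $L-1$ as the roots of $x^2-(2L-1)x+L(L-1)=0$. Substituting $P_0 = LP$ into $R_\text{CF}$ of Proposition~\ref{theorem:cf} and using this identity yields
\begin{equation*}
R_\text{CF} = \frac{1}{2(L-1)} \log \left( \frac{(1+LP)(1+(L-1)P)}{1+(2L-1)P} \right);
\end{equation*}
the $(1+LP)$ factor then cancels cleanly in $R_\text{UB}' - R_\text{CF}$, producing the exact expression claimed in Part~3(a) after combining with $C \leq R_\text{UB}'$. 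Part~3(b) follows by letting $P \to \infty$, so that $\frac{1+(2L-1)P}{1+(L-1)P} \to \frac{2L-1}{L-1} = 1 + \frac{L}{L-1}$.

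The main obstacle is spotting the factorization in Part~3 --- without it, $R_\text{UB}' - R_\text{CF}$ remains a four-term rational function that resists simplification. The only other point requiring care is Part~2(b), where one must remember to use the achievability lower bound $C \geq R_\text{FDF}$ rather than the cut-set upper bound on $C$, since only the former produces the correct direction of the asymptotic gap.
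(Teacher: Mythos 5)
Your proposal is correct and follows essentially the same route as the paper's Appendix~C: anchor each rate to $R_\text{UB}' = \frac{1}{2(L-1)}\log(1+LP)$, use the factorization $1+(2L-1)P+L(L-1)P^2=(1+LP)(1+(L-1)P)$ for compress-forward, and use the achievability of $R_\text{FDF}$ to lower-bound $C$ in the complete-decode-forward asymptotics. The only departures are minor refinements: your exact computation $R_\text{UB}'-R_\text{FDF}=\frac{1}{2(L-1)}$ is marginally tighter than the paper's tangent-line bound of $\frac{1}{2(L-1)\ln 2}$ (so the stated inequality follows strictly), and your appeal to Remark~\ref{remark:cdf-fdf-ub} to handle the uplink terms in 2(a) is more careful than the paper's one-line chain, which asserts $R_\text{UB}'\leq R_\text{UB}$ where the intended relation is the reverse.
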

\begin{proof}
The proof is presented in Appendix~\ref{appendix:scaling}.
\end{proof}

\begin{remark}
It has been shown that compress-forward achieves rates within a constant $\frac{1}{2(L-1)}$ bits of the equal-rate capacity of the restricted AWGN MWRC~\cite[Theorem 1]{gunduzyener09}. In this paper, we further show that this coding strategy can also achieve within a constant fraction of a bit of the equal-rate capacity of the unrestricted AWGN MWRC when $P_0 = LP$.
\end{remark}

Furthermore, we have the following comparison at high SNR:
\begin{corollary}\label{theorem:ifdf-grater-than-cf}
Consider the AWGN MWRC with $P_0=LP$. If $P > \frac{L-1 + \sqrt{ (L-1)^2 + 4L}}{2L}$, then $R_\text{FDF} > R_\text{CF}$.
\end{corollary}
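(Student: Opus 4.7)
The plan is a direct algebraic comparison. First I would substitute $P_0 = LP$ into the expressions from Theorem~\ref{theorem:ifdf} and Proposition~\ref{theorem:cf}. Since $\tfrac{1}{2}+\tfrac{L}{2}P < 1+LP$ for all $P>0$ and the hypothesis $P > \tfrac{L-1+\sqrt{(L-1)^2+4L}}{2L}$ already forces $\tfrac{1}{2}+\tfrac{L}{2}P > 1$ (in particular, the $[\cdot]^+$ is inactive and the first term of the minimum in \eqref{eq:ifdf} governs), one has
\begin{equation*}
R_\text{FDF} = \frac{1}{2(L-1)}\log\!\left(\tfrac{1}{2}+\tfrac{L}{2}P\right),\qquad
R_\text{CF} = \frac{1}{2(L-1)}\log\!\left(1+\frac{L(L-1)P^2}{1+(2L-1)P}\right).
\end{equation*}

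Because $\log$ is monotonic and both expressions share the same pre-log factor $\tfrac{1}{2(L-1)}$, the inequality $R_\text{FDF} > R_\text{CF}$ is equivalent to the comparison of the arguments of the two logarithms. I would clear the denominator $1+(2L-1)P>0$ and expand, which (after cancellation of the $L(2L-1)P^2$ term against $2L(L-1)P^2$) reduces cleanly to the quadratic inequality
\begin{equation*}
LP^2 - (L-1)P - 1 > 0.
\end{equation*}
Applying the quadratic formula and keeping only the positive root yields precisely the threshold $P > \tfrac{L-1+\sqrt{(L-1)^2+4L}}{2L}$ stated in the corollary.

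There is no real obstacle here beyond bookkeeping: the main thing to verify is that under the stated hypothesis we are indeed on the branch where $R_\text{FDF}$ equals its uplink (first) term, and I would note in the write-up that for all $L\geq 2$ the threshold exceeds $1$, so $\tfrac12+\tfrac{L}{2}P > 1$ is automatic and the positive-part operation in \eqref{eq:ifdf} is vacuous. Everything else is a one-line cancellation followed by the quadratic formula.
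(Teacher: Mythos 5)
Your proposal is correct and follows essentially the same route as the paper: both reduce the comparison of the two logarithm arguments (after substituting $P_0=LP$) to the quadratic inequality $LP^2-(L-1)P-1>0$ and read off the stated threshold from its positive root. One cosmetic slip: since $(L-1)^2+4L=(L+1)^2$, the threshold is exactly $1$ rather than ``exceeds $1$'' as you claim, but this does not affect the argument --- $P>1$ still makes $\tfrac{1}{2}+\tfrac{L}{2}P>1$, so the positive-part operation in \eqref{eq:ifdf} remains vacuous as you need.
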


\begin{proof}
If $P > \frac{L-1 + \sqrt{ (L-1)^2 + 4L}}{2L}$, then $LP^2 - (L-1)P -1 > 0$, meaning that $\frac{1}{2(L-1)} \log \left( \frac{1}{2} + \frac{LP}{2} \right) > \frac{1}{2(L-1)} \log \left( 1 + \frac{L(L-1)P^2}{1 + (L-1)P + LP} \right)$, and thus $R_\text{FDF} > R_\text{CF}$.
\end{proof}

From 1(a) and 2(b) of Theorem~\ref{theorem:scaling}, we know that functional-decode-forward achieves rates within a constant fraction of a bit of the equal-rate capacity, and the complete-decode-forward rates are bounded away from the capacity at high SNR. From Theorem~\ref{theorem:ifdf-grater-than-cf}, functional-decode-forward achieves higher rates than compress-forward at high SNR. Hence, in the high SNR regime, functional-decode-forward outperforms the other two schemes.

\begin{figure}[t]
\centering
\resizebox{\linewidth}{!}{ 
\begin{picture}(0,0)%
\includegraphics{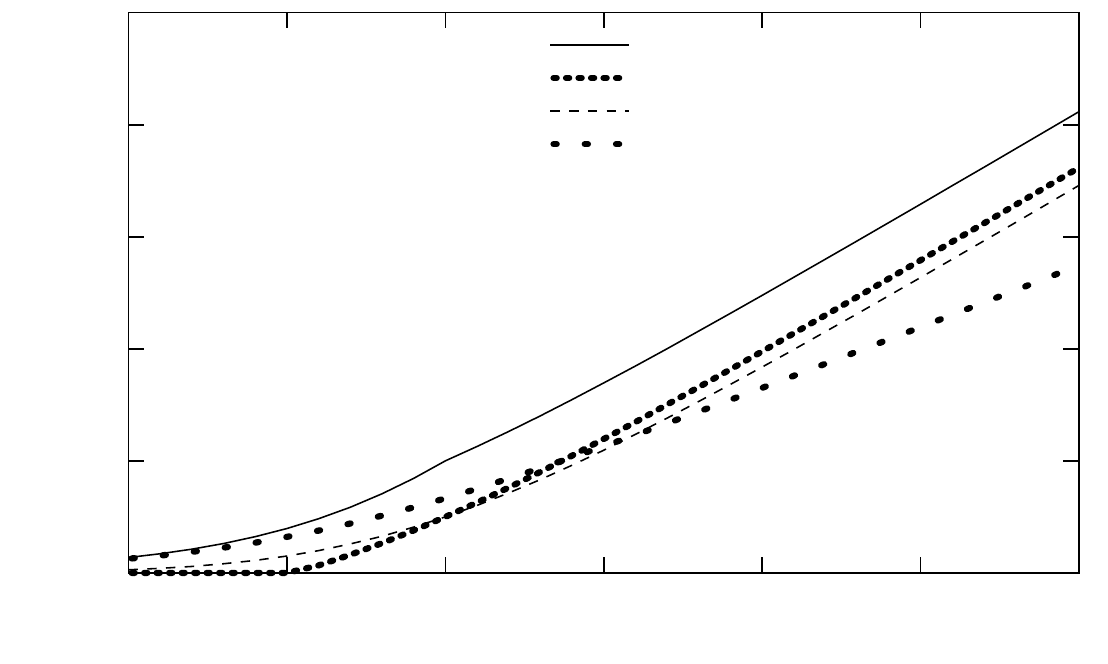}%
\end{picture}%
\setlength{\unitlength}{3947sp}%
\begingroup\makeatletter\ifx\SetFigFont\undefined%
\gdef\SetFigFont#1#2#3#4#5{%
  \reset@font\fontsize{#1}{#2pt}%
  \fontfamily{#3}\fontseries{#4}\fontshape{#5}%
  \selectfont}%
\fi\endgroup%
\begin{picture}(5309,3181)(1196,-3995)
\put(3763,-1088){\makebox(0,0)[rb]{\smash{{\SetFigFont{10}{12.0}{\familydefault}{\mddefault}{\updefault}capacity upper bound}}}}
\put(3763,-1247){\makebox(0,0)[rb]{\smash{{\SetFigFont{10}{12.0}{\familydefault}{\mddefault}{\updefault}functional-decode-forward}}}}
\put(3763,-1406){\makebox(0,0)[rb]{\smash{{\SetFigFont{10}{12.0}{\familydefault}{\mddefault}{\updefault}compress-forward}}}}
\put(3763,-1565){\makebox(0,0)[rb]{\smash{{\SetFigFont{10}{12.0}{\familydefault}{\mddefault}{\updefault}complete-decode-forward}}}}
\put(1738,-3623){\makebox(0,0)[rb]{\smash{{\SetFigFont{10}{12.0}{\familydefault}{\mddefault}{\updefault} 0}}}}
\put(1738,-3085){\makebox(0,0)[rb]{\smash{{\SetFigFont{10}{12.0}{\familydefault}{\mddefault}{\updefault} 0.5}}}}
\put(1738,-2547){\makebox(0,0)[rb]{\smash{{\SetFigFont{10}{12.0}{\familydefault}{\mddefault}{\updefault} 1}}}}
\put(1738,-2010){\makebox(0,0)[rb]{\smash{{\SetFigFont{10}{12.0}{\familydefault}{\mddefault}{\updefault} 1.5}}}}
\put(1738,-1472){\makebox(0,0)[rb]{\smash{{\SetFigFont{10}{12.0}{\familydefault}{\mddefault}{\updefault} 2}}}}
\put(1738,-934){\makebox(0,0)[rb]{\smash{{\SetFigFont{10}{12.0}{\familydefault}{\mddefault}{\updefault} 2.5}}}}
\put(1813,-3748){\makebox(0,0)[b]{\smash{{\SetFigFont{10}{12.0}{\familydefault}{\mddefault}{\updefault}-10}}}}
\put(2573,-3748){\makebox(0,0)[b]{\smash{{\SetFigFont{10}{12.0}{\familydefault}{\mddefault}{\updefault}-5}}}}
\put(3334,-3748){\makebox(0,0)[b]{\smash{{\SetFigFont{10}{12.0}{\familydefault}{\mddefault}{\updefault} 0}}}}
\put(4094,-3748){\makebox(0,0)[b]{\smash{{\SetFigFont{10}{12.0}{\familydefault}{\mddefault}{\updefault} 5}}}}
\put(4854,-3748){\makebox(0,0)[b]{\smash{{\SetFigFont{10}{12.0}{\familydefault}{\mddefault}{\updefault} 10}}}}
\put(5615,-3748){\makebox(0,0)[b]{\smash{{\SetFigFont{10}{12.0}{\familydefault}{\mddefault}{\updefault} 15}}}}
\put(6375,-3748){\makebox(0,0)[b]{\smash{{\SetFigFont{10}{12.0}{\familydefault}{\mddefault}{\updefault} 20}}}}
\put(1331,-2217){\rotatebox{90.0}{\makebox(0,0)[b]{\smash{{\SetFigFont{10}{12.0}{\familydefault}{\mddefault}{\updefault}$R$ [bits/channel use]}}}}}
\put(4094,-3935){\makebox(0,0)[b]{\smash{{\SetFigFont{10}{12.0}{\familydefault}{\mddefault}{\updefault}$\text{SNR}$ [dB]}}}}
\end{picture}%
}
\caption{Comparing different coding strategies and the capacity upper bound ($P_0=LP$, $L=3$, varying $\text{SNR}=P$)}
\label{fig:pfdf-3}
\end{figure}

\begin{figure*}
\begin{center}
\begin{small}
  \begin{tabular}{| c || c | c| c | c || c | c| c | c || c | c| c | c || c | c| c | c || c| c| c | c |}
\hline
 Message tuple ($m$)& \multicolumn{4}{c ||}{$1$} & \multicolumn{4}{c ||}{$2$}& \multicolumn{4}{c ||}{$3$}& \multicolumn{4}{c ||}{$4$} & \multicolumn{4}{c |}{$5$}\\
\hline
 Block ($l$) & $1$ & $2$ & $3$ & $4$ & $1$ & $2$ & $3$ & $4$  & $1$ & $2$ & $3$ & $4$ & $1$ & $2$ & $3$ & $4$ & $1$ & $2$ & $3$ & $4$ \\
\hline
\hline
Node & \multicolumn{20}{c |}{transmission}\\
\hline
1 & $\blacksquare$ & $\square$ & $\square$ & $\square$ & $\square$ & $\square$ & $\square$ & $\blacksquare$ & $\square$ & $\square$ & $\blacksquare$ &  $\blacksquare$ & $\square$ &  $\blacksquare$ & $\blacksquare$ & $\square$ &  $\blacksquare$ &  $\blacksquare$ &   $\square$ & $\square$\\
2 & $\blacksquare$ & $\blacksquare$ & $\square$ & $\square$ & $\blacksquare$  & $\square$ & $\square$ & $\square$ & $\square$ & $\square$ & $\square$ & $\blacksquare$ & $\square$ &  $\square$ & $\blacksquare$ & $\blacksquare$ &  $\square$ &  $\blacksquare$ &  $\blacksquare$ & $\square$\\
3 & $\square$ & $\blacksquare$ & $\blacksquare$ & $\square$ & $\blacksquare$ & $\blacksquare$ & $\square$ & $\square$ & $\blacksquare$ & $\square$ & $\square$ & $\square$ & $\square$ &  $\square$ &  $\square$ & $\blacksquare$ &  $\square$ &  $\square$ &  $\blacksquare$ & $\blacksquare$\\
4 & $\square$ & $\square$ & $\blacksquare$ & $\blacksquare$  & $\square$ & $\blacksquare$ & $\blacksquare$  &  $\square$ &  $\blacksquare$ & $\blacksquare$ & $\square$ &$\square$ & $\blacksquare$  & $\square$ &  $\square$ &  $\square$ & $\square$ &  $\square$ &    $\square$ & $\blacksquare$\\
5 & $\square$ & $\square$ & $\square$ & $\blacksquare$ &  $\square$ & $\square$ & $\blacksquare$ & $\blacksquare$ & $\square$ & $\blacksquare$ & $\blacksquare$ & $\square$ & $\blacksquare$ &  $\blacksquare$ & $\square$ &  $\square$ &  $\blacksquare$ &  $\square$ &  $\square$ & $\square$\\
\hline
\end{tabular}
\end{small}
\end{center}
\caption{Functional-decode-forward for five users}
\label{fig:scheme}
\end{figure*}

Fig.~\ref{fig:pfdf-3} compares the maximum achievable rates of different coding strategies and the capacity upper bound. We fix the number of users at $L=3$, and vary the SNR (defined as the SNR at the relay).
At low SNR, complete-decode-forward achieves rates higher than the other two coding strategies. As SNR increases, the performance of complete-decode-forward degrades, as predicted in Theorem~\ref{theorem:scaling}.
As expected, functional-decode-forward and compress-forward achieve rates within constant gaps from the capacity upper bound, and the former achieves a higher rate at high SNR. In this example, we see that at any SNR, either complete-decode-forward or functional-decode-forward outperforms compress-forward. This phenomenon is also observed for different $L$, and leads to the following conjecture:
\begin{conjecture}
Consider the AWGN MWRC with $P_0=LP$. For any $L$ and any $P$, $R_\text{CF} < \max \{ R_\text{CDF}, R_\text{FDF} \}$.
\end{conjecture}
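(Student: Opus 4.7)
The plan is to partition the SNR range at $P=1$, which I will show is precisely the threshold above which Corollary~\ref{theorem:ifdf-grater-than-cf} already settles the matter. Observing that $(L-1)^2 + 4L = (L+1)^2$, the threshold in that corollary collapses to $\frac{(L-1)+(L+1)}{2L} = 1$. Hence for every $P > 1$ I get $R_\text{FDF} > R_\text{CF}$ directly, establishing $\max\{R_\text{CDF}, R_\text{FDF}\} > R_\text{CF}$ on that half of the range; the remaining work is to prove the stronger statement $R_\text{CDF} > R_\text{CF}$ on $(0, 1]$.

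For that regime, I would first exploit the identity $1 + (2L-1)P + L(L-1)P^2 = (1+LP)(1+(L-1)P)$ to rewrite
\begin{equation*}
R_\text{CF} = \frac{1}{2(L-1)}\log\frac{(1+LP)(1+(L-1)P)}{1+(2L-1)P}.
\end{equation*}
Clearing logarithms then reduces $R_\text{CDF} > R_\text{CF}$ to positivity of
\begin{equation*}
\phi(P) \triangleq L\log(1+(2L-1)P) - \log(1+LP) - L\log(1+(L-1)P).
\end{equation*}

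I would then study $\phi$ on $[0,1]$. The endpoint values are $\phi(0)=0$ and $\phi(1) = L\log 2 - \log(L+1)$, which is positive for all $L \geq 2$ since $2^L > L+1$. Placing $\phi'(P)$ over the common denominator $(1+(2L-1)P)(1+LP)(1+(L-1)P)$ and expanding should yield a numerator of the form $L(L-1)\bigl[1 + (L-2)P - (2L-1)P^2\bigr]$, a quadratic with unique positive root $P^* = \frac{L-2 + \sqrt{L(L+4)}}{2(2L-1)}$. The inequality $P^* < 1$ reduces to $L(L+4) < 9L^2$, i.e., $2L > 1$, so $P^* \in (0,1)$. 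Consequently $\phi$ strictly increases on $[0, P^*]$ and strictly decreases on $[P^*, \infty)$, and combining $\phi(0)=0$ with $\phi(1) > 0$ forces $\phi > 0$ throughout $(0,1]$.

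The main obstacle is the algebraic expansion collapsing the numerator of $\phi'$ to $L(L-1)\bigl[1 + (L-2)P - (2L-1)P^2\bigr]$: this is where the quadratic structure emerges, and a sign error among the cross-terms would destroy the clean monotonicity picture. Beyond that step, the argument hinges on two fortunate identities, namely $(L-1)^2 + 4L = (L+1)^2$ (which forces the corollary's threshold to land at exactly $P=1$) and the factorization of $1 + (2L-1)P + L(L-1)P^2$; each is verifiable in a line, but without them one would need a substantially more delicate calculus argument.
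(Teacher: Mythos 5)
The paper does not prove this statement at all: it is left as a conjecture, supported only by the numerical comparison in Fig.~2 for $L=3$ and the remark that the same behaviour ``is also observed for different $L$.'' Your argument, by contrast, is a genuine analytical proof, and I have checked its key steps: the identity $(L-1)^2+4L=(L+1)^2$ does collapse the threshold of Corollary~\ref{theorem:ifdf-grater-than-cf} to exactly $P=1$, so $R_\text{FDF}>R_\text{CF}$ for all $P>1$; the factorization $1+(2L-1)P+L(L-1)P^2=(1+LP)(1+(L-1)P)$ is correct and reduces $R_\text{CDF}>R_\text{CF}$ to $\phi(P)>0$; and the expansion of the numerator of $\phi'$ over the common denominator does yield $L(L-1)\bigl[1+(L-2)P-(2L-1)P^2\bigr]$ (constant term $L-1$, linear coefficient $(L-1)(L-2)$, quadratic coefficient $-(2L-1)(L-1)$, each times $L$), so $\phi$ is unimodal with peak at $P^*\in(0,1)$ and $\phi(0)=0$, $\phi(1)=L\log 2-\log(L+1)>0$ give $\phi>0$ on $(0,1]$. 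Together the two regimes cover all $P>0$ (the trivial case $P=0$, where all rates vanish, is excluded by the power constraint being positive). What your approach buys is substantial: it upgrades the paper's conjecture to a theorem, and moreover identifies the clean structural fact that $P=1$ (i.e., $0$~dB at the relay) is precisely the crossover below which complete-decode-forward dominates compress-forward and above which functional-decode-forward does --- consistent with, and sharper than, what Fig.~2 suggests.
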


\section{Conclusions}\label{section:conclusion}
We extended functional-decode-forward to the AWGN multiway relay channel (MWRC), and derived a new achievable equal-rate region for the channel. Combining this strategy with an existing strategy, complete-decode-forward, we obtained the capacity for the AWGN MWRC with three or more users when all nodes transmit at the same power. Thus the AWGN MWRC is one that, at the time of writing, requires more than one coding strategy to achieve the capacity. The capacity-achieving strategies do not utilize feedback from the channel to the users, which implies that feedback does not increase the equal-rate capacity of the AWGN MWRC when all nodes transmit at the same power.

For the case when the relay power scales with the number of users, the bottleneck of the network is no longer on the downlink, and we demonstrated that none of functional-decode-forward, complete-decode-forward, or compress-forward is able to achieve the capacity upper bound. However, functional-decode-forward and compress-forward achieve rates within a fixed number of bits of the capacity.

Numerical results suggest that for any $L$ and at any SNR, either complete-decode-forward or functional-decode-forward is able to outperform compress-forward. So, for AWGN MWRCs with equal transmitted power and where the relay's transmitted power scales with the number of users, among the four coding strategies, namely, functional-decode-forward, complete-decode-forward, compress-forward, and amplify-forward, as far as the transmission rate is concerned, complete-decode-forward and functional-decode-forward will suffice.

\appendices

\section{Functional-Decode-Forward for Five Users} \label{appendix:example}
The transmission scheme for $L=5$ for the first five message tuples is shown in Fig.~\ref{fig:scheme}. The symbol $\blacksquare$ means that the node (say node $i$) transmits using lattice codes $\boldsymbol{X}_{i}=\boldsymbol{V}_i(W_i) + \boldsymbol{d}_i \mod \Lambda$ with power $\sum_{\text{one block}} E[X_i^2[\cdot]]/n = \frac{5}{2}P$; while $\square$ means that it does not transmit, i.e., $\boldsymbol{X}_{i}= \boldsymbol{0}$.
Since, each node only transmits in a fraction of $\frac{8}{20}$ of the time, the average transmitted power is $P$.

\section{Proof of Theorem~\ref{theorem:common-addition}} \label{appendix:common-addition}

\subsection{Complete-Decode-Forward}
If $\frac{1}{2(L-1)} \log ( 1 + P) - \frac{1}{2L} \log (1 + LP) \geq 0$, then $R_\text{CDF} = \frac{1}{2L} \log (1 + LP)$.
Now,
\begin{subequations}
\begin{align}
&\frac{1}{2(L-1)} \log ( 1 + P) - \frac{1}{2L} \log (1 + LP) \nonumber \\  &= \frac{1}{2L(L-1)} \left( L \log(1+P) - (L-1) \log (1+LP) \right)\\
&=\frac{1}{2L(L-1)} \left( \log (1+P) - (L-1) \log \left( \frac{LP+1}{P+1} \right) \right) \label{eq:sufficient-condigion-positive-gap}\\
&> \frac{1}{2L(L-1)} \left( \log (1+P) - (L-1) \log L \right), \label{eq:l}
\end{align}
\end{subequations}
where the last inequality is because $1 < \left( \frac{LP+1}{P+1} \right)^{L-1}  < L^{L-1}$ for $L \geq 3$ and $P > 0$. So, if $P \geq L^{L-1}-1$, the RHS of \eqref{eq:sufficient-condigion-positive-gap} is strictly positive, meaning that  $R_\text{CDF} = \frac{1}{2L} \log (1 + LP)$, and $C - R_\text{CDF} > 0$.

For any fixed $L$, from \eqref{eq:sufficient-condigion-positive-gap}, as $P \rightarrow \infty$, we have $R_\text{CDF} = C - \frac{1}{2L(L-1)} \left( \log (1+P) - (L-1) \log \left( \frac{L+\frac{1}{P}}{1+\frac{1}{P}} \right) \right)\\
= C - \mathcal{O} (\log P)$.


\subsection{Compress-Forward}
\begin{subequations}
\begin{align}
R_\text{CF} &= \frac{1}{2(L-1)} \log \left( 1 + \frac{(L-1)P^2}{1+(L-1)P + P} \right) \\
&= \frac{1}{2(L-1)} \log( 1 + P) + \frac{1}{2(L-1)} \log \left( \frac{1 + (L-1)P}{1 + LP} \right)\\
&= C - \frac{1}{2(L-1)} \log \left( 1 +  \frac{P}{1 + (L-1)P} \right)\\
&= C - \frac{1}{2(L-1)} \log \left( 1 + \frac{1}{(L-1) + \frac{1}{P}} \right). \label{eq:assym-cf}
\end{align}
\end{subequations}

For any fixed $L$, as $P \rightarrow \infty$, we have $R_\text{CF} \rightarrow R_\text{UB} - \frac{1}{2(L-1)}  \log \left( 1 + \frac{1}{(L-1)} \right)$.

\section{Proof of Theorem~\ref{theorem:scaling}} \label{appendix:scaling}

\subsection{Functional-Decode-Forward}
When $P_0=LP$, $R_\text{FDF} = \frac{1}{2(L-1)} \log \left( \frac{1}{2} + \frac{L}{2}P \right)$.
Now,
\begin{subequations}
\begin{align}
C &\leq R_\text{UB}'= \frac{1}{2(L-1)} \log ( 1 + LP)\\
&= \frac{1}{2(L-1)} \log \left( \frac{1}{2} + \frac{L}{2}P + \frac{1}{2} + \frac{L}{2}P \right)\\
&< \frac{1}{2(L-1)} \left( \log \left( \frac{1}{2} + \frac{L}{2}P \right) + \frac{\frac{1}{2} + \frac{L}{2}P}{\left(\frac{1}{2} + \frac{L}{2}P \right) \ln 2 }  \right) \label{eq:log-gap-2}\\
&= R_\text{FDF} + \frac{1}{2(L-1) \ln 2},\label{eq:upper-bound-and-capacity-2}
\end{align}
\end{subequations}
where \eqref{eq:log-gap-2} follows from \eqref{eq:log-gap}. So, $R_\text{FDF} > C - \frac{1}{2(L-1) \ln 2}$.

\subsection{Complete-Decode-Forward}
When $P_0=LP$, $R_\text{CDF} = \frac{1}{2L} \log ( 1 + LP) <  \frac{1}{2(L-1)} \log ( 1 + LP)  = R_\text{UB}' \leq R_\text{UB}$.
Next,
\begin{subequations}
\begin{align}
R_\text{CDF} &= R_\text{UB}' - \left[ \frac{1}{2(L-1)} - \frac{1}{2L} \right]\log ( 1 + LP) \\
&< C + \frac{1}{2(L-1) \ln 2}  -\left[ \frac{1}{2(L-1)} - \frac{1}{2L} \right]\log ( 1 + LP), \label{eq:upper-bound-and-capacity}
\end{align}
\end{subequations}
where \eqref{eq:upper-bound-and-capacity} follows from \eqref{eq:upper-bound-and-capacity-2} and $R_\text{FDF} \leq C$.
So, for any fixed $L$, $R_\text{CDF} < C -\mathcal{O}(\log P)$.

\subsection{Compress-Forward} \label{sec:cf-sub}
\begin{subequations}
\begin{align}
R_\text{CF} &= \frac{1}{2(L-1)} \log \left( 1 + \frac{L(L-1)P^2}{1 + (L-1)P + LP} \right)\\
&= \frac{1}{2(L-1)} \log (1 + LP) \nonumber \\ &\quad + \frac{1}{2(L-1)} \log \left( \frac{1 + (L-1)P}{1 + (2L-1)P} \right)\\
&= R_\text{UB}' - \frac{1}{2(L-1)} \log \left( \frac{1 + (2L-1)P}{1 + (L-1)P} \right). \label{eq:cf-sub}
\end{align}
\end{subequations}
So, $R_\text{CF} \geq C - \frac{1}{2(L-1)} \log \left( \frac{1 + (2L-1)P}{1 + (L-1)P} \right)$, and $\lim_{P \rightarrow \infty} \frac{1}{2(L-1)} \log \left( \frac{1 + (2L-1)P}{1 + (L-1)P} \right) = \frac{1}{2(L-1)} \log\left( 1 + \frac{L}{L-1}\right)$.

We can also show that $R_\text{CF} < \frac{1}{2(L-1)} \log (1 + (L-1)P) \leq \frac{1}{2\ell} \log ( 1 + \ell^2 P )$, for all $\ell \in \{1,2,\dotsc, L-1\}$, where the last inequality follows from \eqref{eq:ub-of-ub}. So, $R_\text{CF} < R_\text{UB}$.


\begin{thebibliography}{10}
\providecommand{\url}[1]{#1}
\csname url@samestyle\endcsname
\providecommand{\newblock}{\relax}
\providecommand{\bibinfo}[2]{#2}
\providecommand{\BIBentrySTDinterwordspacing}{\spaceskip=0pt\relax}
\providecommand{\BIBentryALTinterwordstretchfactor}{4}
\providecommand{\BIBentryALTinterwordspacing}{\spaceskip=\fontdimen2\font plus
\BIBentryALTinterwordstretchfactor\fontdimen3\font minus
  \fontdimen4\font\relax}
\providecommand{\BIBforeignlanguage}[2]{{%
\expandafter\ifx\csname l@#1\endcsname\relax
\typeout{** WARNING: IEEEtran.bst: No hyphenation pattern has been}%
\typeout{** loaded for the language `#1'. Using the pattern for}%
\typeout{** the default language instead.}%
\else
\language=\csname l@#1\endcsname
\fi
#2}}
\providecommand{\BIBdecl}{\relax}
\BIBdecl

\bibitem{covergamal79}
T.~M. Cover and A.~A. {El Gamal}, ``Capacity theorems for the relay channel,''
  \emph{IEEE Trans. Inf. Theory}, vol. IT-25, no.~5, pp. 572--584, Sept. 1979.

\bibitem{kramergastpar04}
G.~Kramer, M.~Gastpar, and P.~Gupta, ``Cooperative strategies and capacity
  theorems for relay networks,'' \emph{IEEE Trans. Inf. Theory}, vol.~51,
  no.~9, pp. 3037--3063, Sept. 2005.

\bibitem{rankovwittneben05}
B.~Rankov and A.~Wittneben, ``Spectral efficient signaling for half-duplex
  relay channels,'' in \emph{Proc. 39th Asilomar Conf. Signal Syst. Comput.},
  Pacific Grove, USA, Oct. 28--Nov. 1 2005, p. 1066.

\bibitem{rankovwittneben06}
------, ``Achievable rate regions for the two-way relay channel,'' in
  \emph{Proc. IEEE Int. Symp. Inf. Theory (ISIT)}, Seattle, USA, July 9--14
  2006, pp. 1668--1672.

\bibitem{wuchoukung05}
Y.~Wu, P.~A. Chou, and S.~Kung, ``Information exchange in wireless networks
  with network coding and physical-layer broadcast,'' in \emph{Proc. Conf. Inf.
  Sci. Syst. (CISS)}, Baltimore, USA, Mar. 16--18 2005.

\bibitem{larssonjohanssonsunell05}
P.~Larsson, N.~Johansson, and K.~E. Sunnel, ``Coded bi-directional relaying,''
  in \emph{Proc. 5th Scandinavian Workshop Wirel. Ad Hoc Netw. (ADHOC)},
  Stockholm, Sweden, May 3--4 2005, p.~15.

\bibitem{ahlswedecai00}
R.~Ahlswede, N.~Cai, S.~R. Li, and R.~W. Yeung, ``Network information flow,''
  \emph{IEEE Trans. Inf. Theory}, vol.~46, no.~4, pp. 1204--1216, July 2000.

\bibitem{zhangliew06}
S.~Zhang, S.~Liew, and P.~P. Lam, ``Physical-layer network coding,'' in
  \emph{Proc. 12th Int. Conf. Mob. Comput. Netw. (MobiCom)}, Los Angeles, USA,
  Sept. 24--29 2006, pp. 358--365.

\bibitem{kattigollakota07}
S.~Katti, S.~Gollakota, and D.~Katabi, ``Embracing wireless interference:
  Analog network coding,'' in \emph{Proc. 2007 ACM SIGCOMM Conf.}, Kyoto,
  Japan, Aug. 27--31 2007, pp. 397--408.

\bibitem{namchung08}
W.~Nam, S.~Chung, and Y.~H. Lee, ``Capacity bounds for two-way relay
  channels,'' in \emph{Proc. Int. Zurich Semin. Commun. (IZS)}, Zurich,
  Switzerland, Mar. 12--14 2008, pp. 144--147.

\bibitem{wilsonnarayananpfisersprintson10}
M.~P. Wilson, K.~Narayanan, H.~D. Pfister, and A.~Sprintson, ``Joint physical
  layer coding and network coding for bidirectional relaying,'' \emph{IEEE
  Trans. Inf. Theory}, vol.~56, no.~11, pp. 5641--5654, Nov. 2010.

\bibitem{namchunglee09}
W.~Nam, S.~Chung, and Y.~H. Lee, ``Capacity of the {G}aussian two-way relay
  channel to within $\frac{1}{2}$ bit,'' \emph{IEEE Trans. Inf. Theory},
  vol.~56, no.~11, pp. 5488--5494, Nov. 2010.

\bibitem{nazergastpar08eu}
B.~Nazer and M.~Gastpar, ``The case for structured random codes in network
  capacity theorems,'' \emph{Eur. Trans. Telecommun.}, vol.~19, no.~4, pp.
  455--474, Apr. 2008.

\bibitem{nazergastpar11}
------, ``Compute-and-forward: Harnessing interference through structured
  codes,'' \emph{IEEE Trans. Inf. Theory}, vol.~57, no.~10, pp. 6463--6486,
  Oct. 2011.

\bibitem{gunduzyener09}
D.~G{\"u}nd{\"u}z, A.~Yener, A.~Goldsmith, and H.~V. Poor, ``The multi-way
  relay channel,'' in \emph{Proc. IEEE Int. Symp. Inf. Theory (ISIT)}, Seoul,
  Korea, June 28--July 3 2009, pp. 339--343.

\bibitem{ongjohnsonkellett10cl}
L.~Ong, S.~J. Johnson, and C.~M. Kellett, ``An optimal coding strategy for the
  binary multi-way relay channel,'' \emph{IEEE Commun. Lett.}, vol.~14, no.~4,
  pp. 330--332, Apr. 2010.

\bibitem{ongmjohnsonit11}
------, ``The capacity region of multiway relay channels over finite fields
  with full data exchange,'' \emph{IEEE Trans. Inf. Theory}, vol.~57, no.~5,
  pp. 3016--3031, May 2011.

\bibitem{coverthomas06}
T.~M. Cover and J.~A. Thomas, \emph{Elements of Information Theory},
  2nd~ed.\hskip 1em plus 0.5em minus 0.4em\relax Wiley-Interscience, 2006.

\bibitem{knopp06}
R.~Knopp, ``Two-way radio networks with a star topology,'' in \emph{Proc. Int.
  Zurich Semin. Commun. (IZS)}, Zurich, Switzerland, Feb. 22--24 2006, pp.
  154--157.

\bibitem{schnurroechtering07}
C.~Schnurr, T.~J. Oechtering, and S.~Stanczak, ``Achievable rates for the
  restricted half-duplex two-way relay channel,'' in \emph{Proc. 41st Asilomar
  Conf. Signal Syst. Comput.}, Pacific Grove, USA, Nov. 4--7 2007, pp.
  1468--1472.

\bibitem{erezzamir04}
U.~Erez and R.~Zamir, ``Achieving $\frac{1}{2} \log (1 + \text{SNR})$ on the
  {AWGN} channel with lattice encoding and decoding,'' \emph{IEEE Trans. Inf.
  Theory}, vol.~50, no.~10, pp. 2293--2314, Oct. 2004.

\bibitem{erezzamir08israel}
------, ``A modulo-lattice transformation for multiple-access channels,'' in
  \emph{Proc. 25th IEEE Conv. Electr. Electron. Eng. Israel}, Tel Aviv, Israel,
  Dec. 3--5 2008, pp. 836--840.

\end{thebibliography}


\begin{IEEEbiographynophoto}{Lawrence Ong}
(S'05--M'10) received the BEng (1st Hons) degree in electrical engineering from the National University of Singapore (NUS), Singapore, in 2001. He subsequently received the MPhil degree from the University of Cambridge, UK, in 2004 and the PhD degree from NUS in 2008.

He was with MobileOne, Singapore, as a system engineer from 2001 to 2002. He was a research fellow at NUS, from 2007 to 2008. From 2008 to 2012, he was a postdoctoral researcher at The University of Newcastle, Australia.
In 2012, he was awarded the Discovery Early Career Researcher Award (DECRA) by the Australian Research Council (ARC). He is currently a DECRA fellow at The University of Newcastle.
\end{IEEEbiographynophoto}

\begin{IEEEbiographynophoto}{Christopher M.\ Kellett}
(M'97--SM'10) received the Bachelor of Science in Electrical
Engineering and Mathematics from the University of California, Riverside
and the Master of Science and Doctor of Philosophy in Electrical
and Computer Engineering from the University of California, Santa Barbara.
He subsequently held research positions with the Centre Automatique et Syst\`emes
at \'Ecole des Mines de Paris, the Department of Electrical and Electronic
Engineering at the University of Melbourne, Australia, and the Hamilton Institute
at the National University of Ireland, Maynooth.  Since 2006 he has been
with the School of Electrical Engineering and Computer Science at the University of
Newcastle, Australia where he is currently an Australian Research Council
Future Fellow.  In 2012, Dr.~Kellett was awarded a  Humboldt Research Fellowship
funded by the Alexander von Humboldt Foundation, Germany.

\end{IEEEbiographynophoto}

\begin{IEEEbiographynophoto}{Sarah J.\ Johnson}
(M'04) received the B.E.\ (Hons) degree in electrical engineering in 2000, and PhD in 2004, both from the University of Newcastle, Australia.

She has held research positions with the Wireless Signal Processing
Program, National ICT Australia and the University of Newcastle. In
2011, Dr.\ Johnson was awarded a Future Fellowship by the Australian
Research Council (ARC). She is currently a future fellow at The
University of Newcastle.

\end{IEEEbiographynophoto}

\end{document}